\documentclass[letterpaper, 10 pt, conference]{ieeeconf}  
\IEEEoverridecommandlockouts                              
\overrideIEEEmargins
\usepackage{epsfig} 
\usepackage{graphicx} 
\usepackage{amsmath} 
\usepackage{amssymb} 
 
\usepackage{amsthm} 
\usepackage{enumerate}
\usepackage{bbm}
\usepackage{mathtools, cuted}
\usepackage{xcolor}
\usepackage{mathrsfs}
\usepackage{cite}
\allowdisplaybreaks

\newtheorem{assumption}{\bf Assumption}

\newtheorem{theorem}{\bf Theorem}
\newtheorem{lemma}{\bf Lemma}
\newtheorem{proposition}{\bf Proposition}

\newtheorem{corollary}{\bf Corollary}
\theoremstyle{definition}

\newtheorem{remark}{\bf Remark}

\newcommand{\norm}[1]{\ensuremath{\left\| #1\right\|}}

\newcommand{\paren}[1]{\ensuremath{\left( #1\right)}}
\newcommand{\clint}[1]{\ensuremath{\left[ #1\right]}}
\newcommand{\set}[1]{\ensuremath{\left\{ #1\right\}}}
\newcommand{\matr}[1]{\ensuremath{\clint{\begin{array} #1 \end{array}}}}
\newcommand{\R}{\ensuremath{\mathbb{R}}}
\newcommand{\E}{\ensuremath{\mathbb{E}}}
\newcommand{\F}{\ensuremath{\mathscr{F}}}
\newcommand{\LL}{\ensuremath{\mathcal{L}}}
\DeclareFontFamily{OT1}{pzc}{}
\DeclareFontShape{OT1}{pzc}{m}{it}{<-> s * [1.200] pzcmi7t}{}
\DeclareMathAlphabet{\mathpzc}{OT1}{pzc}{m}{it}
\newcommand{\LAG}{\ensuremath{{\mathpzc{L}_{\hspace{.1pt}}}}}

\newcommand{\Neg}[1]{\hspace{- #1 bp}}

\DeclareMathOperator{\Tr}{\mathrm{tr}}

\title{\LARGE \bf
Risk-Constrained Linear-Quadratic Regulators\vspace{-5bp}}

\author{Anastasios~Tsiamis, Dionysios~S.~Kalogerias, Luiz~F.~O.~Chamon,\\ Alejandro~Ribeiro and George~J.~Pappas
\thanks{The authors  are   with   the   Department   of   Electrical   and   Systems  Engineering,  University  of  Pennsylvania,  Philadelphia,  PA  19104 (email: \{atsiamis,dionysis,luizf,aribeiro,pappasg\}@seas.upenn.edu). }
\thanks{This work is supported by the AFOSR under grant FA9550-19-1-0265 (Assured Autonomy in Contested Environments), and the NSF under grant CPS 1837253.
}
}

\begin{document}

\maketitle
\thispagestyle{empty}
\pagestyle{empty}

\begin{abstract}
We propose a new risk-constrained reformulation of the standard Linear Quadratic Regulator (LQR) problem. Our framework is motivated by the fact that the classical (risk-neutral) LQR controller, although optimal in expectation, might be ineffective under relatively infrequent, yet statistically significant (risky) events.
To effectively trade between average and extreme event performance, we introduce a new risk constraint, which explicitly restricts the total expected predictive variance of the state penalty by a user-prescribed level.
We show that, under rather minimal conditions on the process noise~(i.e., finite fourth-order moments), the optimal risk-aware controller can be evaluated explicitly and in closed form. In fact, it is affine relative to the state, and is always internally stable regardless of parameter tuning. Our new risk-aware controller:
i) pushes the state away from directions where the noise exhibits heavy tails, by exploiting the third-order moment~(skewness) of the noise; ii) inflates the state penalty in riskier directions, where both the noise covariance and the state penalty are simultaneously large. 
The properties of the proposed risk-aware LQR framework are also illustrated via indicative numerical examples.
\end{abstract}
\section{Introduction}\label{Section_Introduction}
Achieving good performance in expectation is often insufficient in 
the design of stochastic control systems, 
especially when dealing with modern, critical applications. Examples
appear naturally in many areas, including wireless industrial control \cite{Ahlen2019}, energy \cite{Bruno2016,Moazeni2015}, finance \cite{Markowitz1952,Follmer2002,Shang2018}, robotics \cite{Kim2019,Pereira2013}, networking \cite{Ma2018}, and safety \cite{samuelson2018safety,chapman2019cvar}, to name a few.  Indeed, occurrence of less probable, non-typical or unexpected events might lead the underlying dynamical system to experience shocks with possibly catastrophic consequences, e.g., a drone diverging too much from a given trajectory in a hostile environment, or an autonomous vehicle crashing onto a wall or hitting a pedestrian. In such situations, design of effective \emph{risk-aware} control policies is highly desirable, systematically compensating for those extreme events, at the cost of slightly sacrificing average performance under nominal conditions.

To highlight the usefulness of a risk-aware control policy, let us consider the following simple, motivating example. Let $x_{k+1}=x_{k}+u_k+w_{k+1}$ model an aerial robot, moving along a line. Assume that the process noise $w_k$ is \textit{i.i.d.} Bernoulli, taking the values $\beta>2$ with probability $1/\beta$ and $0$ with probability $1-1/\beta$. This noise represents shocks, e.g., wind gusts, that can occur with some small probability. We would then like to minimize the LQR cost $\E \sum_{t=0}^{N} \{x^2_t\}$, i.e., the total displacement of the robot over a horizon of $N$ time steps. In this case, the LQR optimal controller is $u^{\mathrm{LQR}}_k=-x_k-1$, where $-1\equiv-\mathbb{E}w_k$ cancels the mean of the process noise. We see that the LQR solution is \textit{risk-neutral}, as it does not account for the fact that the shock $\beta$ could be arbitrarily large.  On the other hand, the risk-aware LQR formulation proposed in this work results in a \textit{family} of optimal controllers of the form
\[
u^{*}_{t}(\mu)=-x_t-1-\frac{\mu}{1+2\mu}(\beta-2),\quad\mu\ge0,
\]
where $\mu$ controls the trade-off between average performance and risk. As $\mu$ increases, we move from the risk-neutral to the \textit{maximally risk-aware} controller $u^{*}_{t}(\infty)=-x_t-\beta/2$, which treats the noise as adversarial---see Fig.~\ref{fig:toy_example}.
\begin{figure}[t]
\vspace{6.5bp}
	\centering
	\includegraphics[width=\columnwidth]{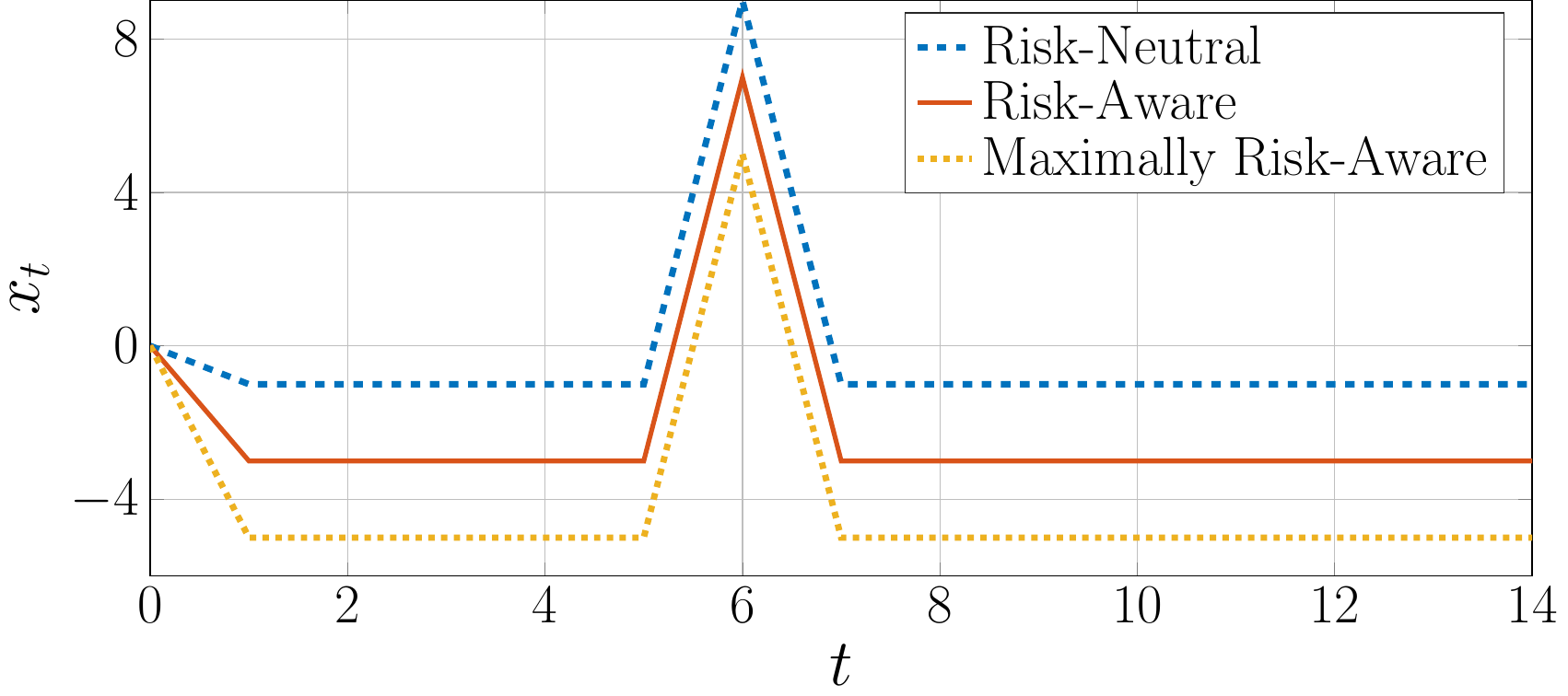}\vspace{-4bp}
	\caption{Comparison between risk-neutral and risk-aware control performance, when the system experiences rare but large shocks---here the shock occurs at time $6$. By sacrificing average behavior, the risk-aware controllers push the state away from the direction of the shock. 
}
	\label{fig:toy_example}
 	\vspace{-13bp}
\end{figure}

In both classical and recent literature in linear-quadratic problems, risk awareness in estimation and control is typically achieved by replacing the respective random cost with its exponentiation~\cite{jacobson1973optimal,Whittle1981,bacsar2000risk,pham2012linear,roulet2019convergence,Speyer1992,Dey1999,Moore1997,Dey1997,Bauerle2014more}. Yet, the resulting stochastic control problem might not be well-defined for general classes of noise distributions, as it requires the moment generating function of the cost to be finite. Thus, heavy-tailed or skewed distributions, which are precisely those exhibiting high risk, are naturally excluded. Also, even if the expectation of the exponential cost is finite, it does not lead to a general, closed-form and interpretable solution. 
A notable exception is that of a
Gaussian process noise, also known as the Linear Exponential Quadratic Gaussian (LEQG) problem, which does enjoy a simple closed-form solution~\cite{Whittle1981}. 
Apparently though, the Gaussian assumption is unable to capture distributions with asymmetric (skewed) structure, as in the above example.

In this paper, we propose a new risk-aware reformulation of the LQR problem, in which the standard LQR objective is minimized \textit{subject to} an explicit and tunable risk constraint. Our contributions are as follows.

\noindent\textbf{--New Risk Measure.} We introduce  the cumulative expected one-step predictive variance of the associated state penalty as a new risk measure for LQR control.
In this way, our risk-constraint formulation ensures not only a small LQR cost, but also \textit{guaranteed} statistical variability of the state penalty.

\noindent
\textbf{--Optimal Risk-Aware Controls \& Stability.}
We show that our new risk-constrained formulation results in a quadratically constrained LQR problem (Proposition~\ref{ANA_PROP_Reformulation}), which admits an explicit closed-form solution with a natural interpretation. The optimal risk-aware feedback controller is affine with respect to the system state (Theorems~\ref{ANA_THM_Optimal_Input} \& \ref{OPTIMAL}). 
The affine component of the control law 
pushes the state away from directions where the noise exhibits (skewed) heavy tails. Meanwhile, the state feedback gain satisfies a new risk-aware Riccati recursion, in which the state penalty is inflated in riskier directions, where both the noise covariance and state penalty are simultaneously larger. Further, we show that our optimal risk-aware controller is always stable, under standard LQR conditions (Corollary \ref{STABILITY}).

\noindent\textbf{--Arbitrary Noise Model.}
Contrary to the LEQG approach, our results are valid for arbitrary noise distributions, provided the associated fourth-order moments are finite; thus, heavy-tailed or skewed noises are supported within our framework.

\noindent\textbf{--Relation to LQR with Tracking.}
We show that, \textit{by appropriate re-parameterization}, our risk-aware LQR problem is equivalent to a generalized risk-neutral LQR problem with a tracking objective. Essentially, this implies that risk-neutral LQR formulations can provide inherent risk-averse behavior, as long as the involved parameters are selected in a principled way, as presented herein.

\textbf{\textit{Related Work:}}
Risk-aware optimization has been studied in a wide variety of decision making contexts \cite{A.2018,Cardoso2019,W.Huang2017,Jiang2017,Kalogerias2018b,Tamar2017,Vitt2018,Zhou2018,Ruszczynski2010,SOPASAKIS2019281,chapman2019cvar}. The basic idea is to replace expectations by more general functionals, called risk measures\cite{ShapiroLectures_2ND}, purposed to effectively quantify the statistical volatility of the involved random cost function, in addition to mean performance. Typical examples are mean-variance functionals \cite{Markowitz1952,ShapiroLectures_2ND},
mean-semideviations \cite{Kalogerias2018b}, and Conditional Value-at-Risk
(CVaR) \cite{Rockafellar1997}.

In the case of control systems, apart form the aforementioned exponential approach, CVaR optimization techniques have also been considered for risk-aware constraint satisfaction~\cite{chapman2019cvar}. Although CVaR captures variability and tail events well, CVaR optimization problems rarely enjoy closed-form expressions. Approximations are usually required to make computations tractable, e.g., process noise and controls are assumed to be finite-valued~\cite{chapman2019cvar}, thus excluding the LQR setting. Predictive variance constraints have also been used as a measure of risk in portfolio optimization~\cite{abeille2016lqg}. Different from our paper, the noise is assumed to be Gaussian and the variance is with respect to linear stage costs.

Another related concept is that of robust control, where the system model is unknown. The objective is to optimally control the true system under worst case model uncertainty. When there is no model uncertainty, in the case of stochastic process noise, robust controllers usually reduce to their risk-neutral LQR counterparts~\cite{tzortzis2016robust,dean2017sample}. On the contrary, in risk-aware control, extreme noise events are part of the system model. Even if the system is exactly modeled, we would still need to consider risk-aware control if the process noise is heavy-tailed or highly variable. From this point of view, robustness and risk are orthogonal concepts.
Interestingly, in the case of adversarial noise, there is a connection between robust and maximally risk-aware controllers~\cite{glover1988state}.

\textit{\textbf{Notation:}}
The transpose operation is denoted by $(\cdot)'$. If $x_{k},\dots,x_{t}$ is a sequence of vectors, then $x_{k:t}$ denotes the batch vector of all $x_i$ for $k\le i\le t$. The $\sigma$-algebra generated by a random vector $x$ is denoted by $\sigma(x)$.

\section{Risk-Constrained LQR Formulation} \label{Section_Formulation}

Consider a discrete-time linear system in state-space form evolving according to the stochastic difference equation
\begin{equation}\label{FOR_EQN_system}
x_{t+1}=Ax_{t}+Bu_{t}+w_{t+1},
\end{equation}
where $x_t\in \R^n$ is the state, $u_t\in \R^p$ is an exogenous control signal, $A$ is the state transition matrix, and $B$ is the input matrix. We assume that the initial value $x_0$ is deterministic and fixed. Signal $w_{t}\in \R^n$ is a random process noise (not necessarily Gaussian) and is assumed to be \textit{i.i.d}. 
For $t\ge0$, let $\F_t=\sigma\paren{x_{0:t},u_{0:t}}$ be the $\sigma$-algebra generated by all observables up to time $t$, and let $\F_{-1}$ be the trivial $\sigma$-algebra. Based on this notation, $x_t,u_t,w_t$ are $\F_t$-measurable, while $w_{t+1}$ is independent of $\F_t$.
We also make an additional assumption on the process noise, as follows.
\begin{assumption}[\textbf{Noise Regularity}]\label{FOR_ASS_noise}
The process $w_t$ has finite fourth-order moment, i.e., for every $t\in \mathbb{N}$, $\E\norm{w_{t}}^4_2<\infty$.
\end{assumption}
\noindent The above condition is mild and satisfied by very general noise distributions, including many heavy-tailed ones.
Denote the mean of the noise by $\bar{w}\triangleq\E w_k$ and its variance by $W\triangleq \E(w_k-\bar{w})(w_k-\bar{w})'$.

 In the classical risk-neutral formulation of the LQR problem, one is interested in the multistage stochastic program
 \begin{equation}\label{FOR_EQN_Risk_Neutral_LQR}
\begin{aligned}
\min_{u}&\quad \E\set{ x'_NQ x_N+\sum_{t=0}^{N-1} x'_tQx_t+u'_tRu_t}\\
\mathrm{s.t.}&\quad x_{t+1}=Ax_t+Bu_t+w_{t+1}\\
&\quad u_t\in \LL_{2}(\F_t),\,t=0,\dots N-1
\end{aligned},
\end{equation}
where $u=u_{0:N-1}$ are the inputs from time $0$ up to time $N-1$, for some horizon $N\in \mathbb{N}$. For each $t$, the \textit{causality constraint} on $u_t$ restricts the inputs to the space of square-integrable $\F_t$-measurable vector-valued random elements of appropriate dimension,
denoted as $\LL_2(\F_t)$. It also guarantees that the optimization problem is well-defined and with finite cost. In order for the optimal LQR controller to be well-behaved and stable as the horizon $N$ grows, we also make the following standard assumption.
\begin{assumption}[\textbf{LQR}]\label{FOR_ASS_LQR}
The pair $(A,B)$ is stabilizable, the pair $(A,Q^{1/2})$ is detectable, matrix $Q\succeq 0$ is positive semi-definite and matrix $R\succ 0$ is positive definite.
\end{assumption}
As mentioned above, the classical LQR problem is risk-neutral, since it optimizes performance only on average ~\cite{bertsekas2017dynamic}. 
Still, even if the average performance is good, the state can grow arbitrarily large under less probable, yet extreme events.
In other words, the state can exhibit large variability.
To deal with this issue, we propose a risk-constrained formulation of the LQR problem, posed as
\begin{equation}\label{FOR_EQN_LQR_constrained}
\begin{aligned}
\min_{u} &\quad \E\set{ x'_NQ x_N+\sum_{t=0}^{N-1} x'_tQx_t+u'_tRu_t}\\
\mathrm{s.t.} &\quad \E \set{\sum_{t=1}^{N} \left[ x'_tQ x_t-\E\paren{x'_tQx_t|\F_{t-1}} \right]^2}\le \epsilon\\
&\quad x_{t+1}=Ax_t+Bu_t+w_{t+1}\\ 
&\quad u_t\in \LL_{2}(\F_t),\,t=0,\dots, N-1
\end{aligned}\,.
\end{equation}
\noindent  Here, the risk measure adopted is the cumulative expected predictive variance of the state cost. The predictive variance incorporates information about the tail \textit{and} skewness of the penalty $x_t'Qx_t$. This forces the controller to take higher-order noise statistics into account, mitigating the effect of rare though large noise values.
Hence, our risk-aware LQR formulation not only forces the state $x_t$ to be close to zero, but also explicitly restricts its variability.

Problem \eqref{FOR_EQN_LQR_constrained} offers a simple and interpretable way to control the trade-off between average performance and risk. By simply decreasing $\epsilon$, we increase risk-awareness. Inspired by standard risk-aware formulations, in the above optimization problem our risk definition is tied to the specific state penalty $x'_tQx_t$ of the LQR. However, all of our results are still valid if we employ the predictive variance of a different quadratic form, e.g., the norm of the state, $\norm{x_t}^2$, in the constraint---see Section~\ref{Section_Extensions}. 
Lastly, note that the initial state is fixed (for simplicity), so there is no associated risk term for $t=0$.

In the next section, we show that the risk constraint of \eqref{FOR_EQN_LQR_constrained} can be rewritten in quadratic form. This will allow us to solve problem \eqref{FOR_EQN_LQR_constrained} using duality theory and obtain a closed-form solution, exploiting higher-order noise moments.

\section{Optimal Risk-Aware LQR Controllers
}\label{Section_Analysis}

The analysis of the risk-aware dynamic program~\eqref{FOR_EQN_LQR_constrained} consists of the following steps. First, we ensure the well-definiteness of~\eqref{FOR_EQN_LQR_constrained}, also showing that~\eqref{FOR_EQN_LQR_constrained} can be equivalently reexpressed as a \textit{sequential variational Quadratically Contrained Quadratic Program (QCQP)}, or, more precisely, as a \textit{Quadratically Constrained LQR (QC-LQR)} problem (Proposition~\ref{ANA_PROP_Reformulation}). Then, we exploit Lagrangian duality (Theorem~\ref{KKT}) to solve~\eqref{FOR_EQN_LQR_constrained} \textit{exactly and in closed form}. More specifically, we first derive an explicit expression for the optimal risk-aware controller~(Theorem~\ref{ANA_THM_Optimal_Input}), given an arbitrary but fixed Lagrange multiplier. Then, we show how an optimal Lagrange multiplier may be efficiently discovered via trivial bisection~(Theorem~\ref{OPTIMAL} and Proposition \ref{ANA_PROP_Risk_Evaluation}).
\begin{proposition}[\textbf{QCQP Reformulation}]\label{ANA_PROP_Reformulation}
	Let Assumption~\ref{FOR_ASS_noise} be in effect, and define the higher-order weighted statistics
	\begin{align}
	M_3&\triangleq \E\set{(w_{i}-\bar{w})(w_{i}-\bar{w})'Q(w_{i}-\bar{w})}\in\mathbb{R}^n
	\,\,\, \textrm{and}\nonumber\\
	m_4&\triangleq \E\Neg{-.5}\big\{\Neg{1.5}\left[(w_{i}-\bar{w})'Q(w_{i}-\bar{w})-\Tr (WQ)\right]^2\Neg{1.5}\big\}\label{ANA_EQN_fourth_moment}\ge0.\nonumber
	\end{align}
Then, the risk-constrained LQR problem~\eqref{FOR_EQN_LQR_constrained} is well-defined and equivalent to the sequential variational QCQP
\begin{equation}\label{ANA_EQN_Reformulation}
\begin{aligned}
    \min_{u} &\hspace{-6pt}& J(u) \triangleq\,\,&  \E\set{ x'_NQ x_N+\sum_{t=0}^{N-1} x'_tQx_t+u'_tRu_t}\\
\mathrm{s.t.} &\hspace{-6pt}& J_R(u) \triangleq\,\,& \E\set{\sum_{t=1}^{N}4x'_tQWQx_t+4x_t'QM_3}\le \bar{\epsilon}\\
&\hspace{-6pt}& & x_{t+1}=Ax_t+Bu_t+w_{t+1}\\
&\hspace{-6pt}& & u_t\in \LL_{2}(\F_t),\quad t=0,\dots N-1,
\end{aligned}\,\,,
\end{equation}
where $\bar{\epsilon}\triangleq\epsilon-Nm_4+4N\Tr\set{(WQ)^2}$.
\end{proposition}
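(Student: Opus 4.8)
The plan is to expand the inner predictive-variance term of the constraint in \eqref{FOR_EQN_LQR_constrained} by hand, exploiting that $w_t$ is independent of $\F_{t-1}$, and to show that taking expectations collapses its apparent fourth-order dependence on the state into a purely quadratic form plus horizon-dependent constants. First I would fix $t$ and decompose $x_t=\hat{x}_t+\tilde{w}_t$, where $\hat{x}_t\triangleq\E\paren{x_t\mid\F_{t-1}}=Ax_{t-1}+Bu_{t-1}+\bar{w}$ is $\F_{t-1}$-measurable and $\tilde{w}_t\triangleq w_t-\bar{w}$ is zero-mean and independent of $\F_{t-1}$. Substituting into $x_t'Qx_t$ and subtracting its conditional mean, every term free of $\tilde{w}_t$ cancels, leaving the centered penalty
\[
D_t\triangleq x_t'Qx_t-\E\paren{x_t'Qx_t\mid\F_{t-1}}=2\hat{x}_t'Q\tilde{w}_t+\paren{\tilde{w}_t'Q\tilde{w}_t-\Tr(WQ)}.
\]

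Next I would square $D_t$ and take $\E\paren{\cdot\mid\F_{t-1}}$. Since $\hat{x}_t$ is $\F_{t-1}$-measurable and $\tilde{w}_t$ is independent of $\F_{t-1}$, the three pieces evaluate cleanly: the linear term squared gives $4\hat{x}_t'QWQ\hat{x}_t$; the quadratic term squared gives exactly $m_4$; and the cross term gives $4\hat{x}_t'QM_3$, with the odd moment $M_3$ surviving precisely because $\tilde{w}_t$ is centered. Hence $\E\paren{D_t^2\mid\F_{t-1}}=4\hat{x}_t'QWQ\hat{x}_t+4\hat{x}_t'QM_3+m_4$. Applying the tower property and converting back via $x_t=\hat{x}_t+\tilde{w}_t$, the resulting cross terms vanish in expectation, the $QWQ$ term absorbs the constant $-\Tr\set{(WQ)^2}$, and the $M_3$ term is unchanged, so that $\E\paren{D_t^2}=\E\set{4x_t'QWQx_t+4x_t'QM_3}+m_4-4\Tr\set{(WQ)^2}$. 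Summing over $t=1,\dots,N$ and moving the constant $Nm_4-4N\Tr\set{(WQ)^2}$ to the right-hand side turns the original constraint into $J_R(u)\le\bar{\epsilon}$ with $\bar{\epsilon}$ as stated.

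For well-definedness, I would note that Assumption~\ref{FOR_ASS_noise} guarantees $m_4$, $M_3$ and $W$ are all finite, and that a short induction on the dynamics \eqref{FOR_EQN_system} shows $x_t\in\LL_2$ whenever each $u_t\in\LL_2(\F_t)$ (using that finite fourth moments imply finite second moments of $w_t$); consequently both $J(u)$ and $J_R(u)$ are finite, and the two problems share the same feasible set and objective.

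The step I expect to be the main obstacle is the conditional computation of $\E\paren{D_t^2\mid\F_{t-1}}$: one must verify that all odd cross-terms cancel under the conditioning and that the surviving third- and fourth-order centered moments line up \emph{exactly} with the definitions of $M_3$ and $m_4$, since this is where skewness and kurtosis enter and where any stray factor or sign would corrupt $\bar{\epsilon}$.
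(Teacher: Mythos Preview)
Your proposal is correct and follows essentially the same route as the paper: decompose $x_t=\hat{x}_t+\tilde{w}_t$, obtain $D_t=2\hat{x}_t'Q\tilde{w}_t+(\tilde{w}_t'Q\tilde{w}_t-\Tr(WQ))$, compute $\E\paren{D_t^2\mid\F_{t-1}}$ via independence to get $4\hat{x}_t'QWQ\hat{x}_t+4\hat{x}_t'QM_3+m_4$, and then swap $\hat{x}_t$ for $x_t$ using orthogonality, which produces exactly the $-4\Tr\set{(WQ)^2}$ correction. The well-definedness argument (finiteness of moments and $x_t\in\LL_2(\F_t)$ by induction) also matches the paper's Step~a).
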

The proof can be found in the Appendix. Proposition~\ref{ANA_PROP_Reformulation}
is critical because it shows that our risk constraint is quadratic with respect to the state and control inputs. This enables us to apply duality theory, as discussed next.
\subsection{Lagrangian Duality}
To tackle problem~\eqref{FOR_EQN_LQR_constrained}, we now consider the \textit{variational Lagrangian} $\LAG : \LL_{2}(\F_0) \times \dots \times \LL_{2}(\F_{N-1}) \times \mathbb{R}_+ \rightarrow \mathbb{R}$ of the sequential QCQP~\eqref{ANA_EQN_Reformulation}, defined as
\begin{align}\label{ANA_EQN_Lagrangian_Original}
&\LAG(u,\mu)\triangleq J(u)+\mu J_R(u)-\mu \bar{\epsilon},
\end{align}
where $\mu\in\mathbb{R}_+$ is a multiplier associated with the variational risk constraint of \eqref{ANA_EQN_Reformulation}. Hereafter, problem \eqref{ANA_EQN_Reformulation} will be called the \textit{primal problem}.
Accordingly, the \textit{dual function} $D\hspace{-0.5pt}\hspace{-0.5pt}:\hspace{-0.5pt}\hspace{-0.5pt}\mathbb{R}_{+}\hspace{-0.5pt}\hspace{-0.5pt}\rightarrow\hspace{-0.5pt}\hspace{-0.5pt}\hspace{-0.5pt}[-\infty,\infty)$
is additionally defined as
\begin{equation}\label{FDUAL}
D(\mu)\triangleq\inf_{u\in{\cal U}_0}\mathpzc{L}(u,\mu),
\end{equation}
where the \textit{implicit feasible set} $\mathcal{U}_0$ obeys ($k\le N-1$)
\begin{equation}
\mathcal{U}_k \Neg{1} \triangleq \Neg{1.5} \set{\Neg{.5} u_{k:N-1} \Neg{1}\in\Neg{1} \prod_{t=k}^{N-1} \LL_{2}(\F_t) \Neg{-1}\Bigg|\Neg{-1} x_{t+1}\Neg{1}=\Neg{1}Ax_t\Neg{1}+\Neg{1}Bu_t\Neg{1}+\Neg{1}w_{t+1} \Neg{.5}}\Neg{1}\Neg{.5},\Neg{1}\nonumber
\end{equation}
and contains the constraints of \eqref{ANA_EQN_Reformulation} that have not been dualized in the construction of the Lagrangian in \eqref{ANA_EQN_Lagrangian_Original}.
Note that it is always the case that $D\le J^{*}$ on $\mathbb{R}_{+}$, where  $J^{*}\hspace{-1pt}\in\hspace{-1pt}[0,\infty]$
denotes the optimal value of the primal problem \eqref{ANA_EQN_Reformulation}.
Then, the optimal value of the always concave \textit{dual problem}
\begin{equation}\label{DUAL}
\sup_{\mu\ge0} D(\mu) \equiv \sup_{\mu\ge0}\inf_{u\in{\cal U}_0}\mathpzc{L}(u,\mu),
\end{equation}
${D}^{*}\hspace{-2pt}\triangleq\hspace{-.5pt}\sup_{\mu\ge0}{D}(\mu)\hspace{-1pt}\in\hspace{-1pt}[-\infty,\hspace{-0.5pt}\infty]$,
is the tightest under-estimate of ${J}^{*}$, when knowing
only ${D}$.

Leveraging Lagrangian duality, we may now state the following result, which provides sufficient optimality conditions for the QCQP \eqref{ANA_EQN_Reformulation}. The proof is omitted, as it follows as direct application of \cite[Theorem 4.10]{ruszczynski2011nonlinear} 
.

\begin{theorem}[\textbf{Optimality Conditions}]\label{KKT}
Let Assumption \ref{FOR_ASS_noise} be in effect. Suppose that there exists a feasible policy-multiplier pair $(u^*,\mu^*) \in \mathcal{U}_0 \times \mathbb{R}_+$ such that
\begin{enumerate}
    \item $\LAG(u^*(\mu^*),\mu^*)=\min_{u\in\mathcal{U}_0}\LAG(u,\mu^*)=D(\mu^*)$;
    
    \item $J_R(u^*)\le\bar{\epsilon}$, i.e., the dualized variational risk constraint of \eqref{ANA_EQN_Reformulation} is satisfied by control policy $u^*$;
    
    \item $\mu^*\Neg{.5}(J_R(u^*)-\bar{\epsilon})\Neg{2.1}=\Neg{1.8}0$, i.e., complementary slackness holds.
\end{enumerate}
Then, $u^*$ is optimal for both the primal problem \eqref{ANA_EQN_Reformulation} and the initial problem \eqref{FOR_EQN_LQR_constrained}, $\mu^*$ is optimal for the dual problem \eqref{DUAL}, and \eqref{ANA_EQN_Reformulation} exhibits zero duality gap, that is, $D^*\equiv P^*<\infty$.
\end{theorem}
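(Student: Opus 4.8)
The plan is to observe that hypotheses~(1)--(3) say exactly that $(u^*,\mu^*)$ is a \emph{saddle point} of the variational Lagrangian $\LAG$, and then to extract primal optimality, dual optimality, and zero duality gap from the classical saddle-point sufficiency argument, which needs neither convexity nor a constraint qualification. First I would fix an arbitrary feasible policy $u\in\mathcal{U}_0$, i.e.\ one with $J_R(u)\le\bar{\epsilon}$. Because $\mu^*\ge0$ and $J_R(u)-\bar{\epsilon}\le0$, the dualized term is nonpositive, whence $J(u)\ge J(u)+\mu^*\paren{J_R(u)-\bar{\epsilon}}=\LAG(u,\mu^*)$.

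Next I would chain this bound through hypotheses~(1) and~(3). By~(1), $\LAG(u,\mu^*)\ge\min_{v\in\mathcal{U}_0}\LAG(v,\mu^*)=\LAG(u^*,\mu^*)=D(\mu^*)$; by complementary slackness~(3), $\LAG(u^*,\mu^*)=J(u^*)+\mu^*\paren{J_R(u^*)-\bar{\epsilon}}=J(u^*)$. Combining, $J(u)\ge J(u^*)$ for every feasible $u$. Since $u^*$ is itself feasible by~(2), this yields $J(u^*)=J^*$ and certifies $u^*$ as optimal for the primal QCQP~\eqref{ANA_EQN_Reformulation}; by the equivalence established in Proposition~\ref{ANA_PROP_Reformulation}, $u^*$ is then optimal for~\eqref{FOR_EQN_LQR_constrained} as well.

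For the dual claims I would invoke the weak-duality bound $D(\mu)\le J^*$, valid for all $\mu\ge0$. The computation above gives $D(\mu^*)=J(u^*)=J^*$, so $D^*=\sup_{\mu\ge0}D(\mu)\ge D(\mu^*)=J^*$; together with $D^*\le J^*$ this forces $D^*=J^*$, establishing zero duality gap and showing $\mu^*$ attains the dual supremum, hence is dual optimal. Finiteness is immediate: $u^*$ is feasible and, by Proposition~\ref{ANA_PROP_Reformulation} under Assumption~\ref{FOR_ASS_noise}, $J$ is finite on the feasible set, so that $J^*<\infty$, and the zero-gap identity $D^*=J^*$ yields the claimed $D^*\equiv P^*<\infty$.

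I expect no genuine obstacle, as this is only the \emph{sufficiency} direction of the KKT conditions and amounts to a purely algebraic ordering of inequalities. The sole point requiring care is that every functional be well-defined and finite in the variational ($\LL_2(\F_t)$) setting---so that the manipulations are legitimate and the infimum defining $D(\mu^*)$ is not $-\infty$---which is already guaranteed by the well-definedness proved in Proposition~\ref{ANA_PROP_Reformulation} under the finite fourth-moment Assumption~\ref{FOR_ASS_noise}. The genuinely difficult part, namely \emph{existence} of a saddle-point pair $(u^*,\mu^*)$ (which ordinarily demands a constraint qualification such as Slater's condition), is deliberately sidestepped here, since the theorem takes this pair as a hypothesis.
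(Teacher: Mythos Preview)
Your argument is correct and is precisely the standard saddle-point sufficiency proof; the paper itself omits the proof entirely, simply citing \cite[Theorem~4.10]{ruszczynski2011nonlinear}, of which your write-up is essentially an explicit instantiation in the present variational setting.
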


Theorem \ref{KKT} will be serving as the backbone of our analysis towards the solution to problem \eqref{ANA_EQN_Reformulation}, presented 
as follows.

\subsection{Optimal Risk-Aware Control Policies}
Let $\mu\ge0$ be arbitrary but fixed. First, we may simplify the form of the Lagrangian $\LAG$ and express it within a canonical dynamic programming framework. In this respect, we have the following, almost obvious, but important result.
\begin{lemma}[\textbf{Lagrangian Reformulation}]\label{ANA_THM_Lagrangian}
Let Assumption \ref{FOR_ASS_noise} be in effect and define the inflated state penalty matrix
\begin{equation}
    Q_{\mu} \triangleq Q+4\mu QWQ.\nonumber
\end{equation}
Then, for every $u_t\in\LL_2(\F_t)$, $t\le N-1$, the Lagrangian function $\LAG$ can be expressed as
	\begin{equation}\label{ANA_EQN_Lagrangian}
	\LL(u,\mu)\hspace{-1pt}=\hspace{-1pt}  \E\hspace{-1pt}\set{g_{N}(x_N,\mu)\hspace{-1pt}+\hspace{-1pt}\sum_{t=0}^{N-1}g_{t}(x_t,u_t,\mu)\hspace{-1pt}}\hspace{-.5pt}+g(\mu),\hspace{-1pt}
	\end{equation}
	where, in the notation of Proposition~\ref{ANA_PROP_Reformulation},
	\begin{align*}
		g_N(x_N,\mu)&\triangleq x_N'Q_{\mu}x_N+4\mu M_3'Qx_N,\\
	g_{t}(x_t,u_t,\mu)&\triangleq x_t'Q_{\mu}x_t+4\mu M_3'Qx_t+u_t'Ru_t, \, t\le N-1,\Neg{-1.5} \\\
	\textrm{and}\quad g(\mu)&\triangleq -\mu \bar{\epsilon}-4\mu x_0'QWQ x_0-4\mu M_3'Qx_0.
	\end{align*}
\end{lemma}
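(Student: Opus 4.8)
The plan is to treat this as a direct algebraic regrouping identity. I would begin from the definition $\LAG(u,\mu) = J(u) + \mu J_R(u) - \mu\bar{\epsilon}$ in \eqref{ANA_EQN_Lagrangian_Original} and substitute the closed forms of $J$ and $J_R$ furnished by Proposition~\ref{ANA_PROP_Reformulation}, so that
\begin{align*}
\LAG(u,\mu) =\ & \E\set{x_N'Qx_N + \sum_{t=0}^{N-1}\paren{x_t'Qx_t + u_t'Ru_t}} \\
& + \mu\,\E\set{\sum_{t=1}^N \paren{4x_t'QWQx_t + 4x_t'QM_3}} - \mu\bar{\epsilon}.
\end{align*}
A preliminary observation is the scalar-transpose identity $x_t'QM_3 = M_3'Qx_t$, valid since each side is a scalar and $Q$ is symmetric; this rewrites the linear risk term into the shape appearing in $g_t$ and $g_N$.

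The substantive step is to regroup the two sums by time index and invoke the definition $Q_\mu = Q + 4\mu QWQ$. For each interior index $1\le t\le N-1$, the collected running terms are $x_t'Qx_t + 4\mu x_t'QWQx_t + 4\mu M_3'Qx_t + u_t'Ru_t$, which factor precisely as $x_t'Q_\mu x_t + 4\mu M_3'Qx_t + u_t'Ru_t = g_t(x_t,u_t,\mu)$; likewise the terminal index $t=N$ yields $x_N'Q_\mu x_N + 4\mu M_3'Qx_N = g_N(x_N,\mu)$. These stages therefore assemble verbatim into the canonical dynamic-programming sum of \eqref{ANA_EQN_Lagrangian}.

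The only genuine subtlety---and the one place where care is needed---is the $t=0$ stage: the risk penalty $J_R$ is indexed from $t=1$, so no inflation term is present at $t=0$, yet the target summand $g_0(x_0,u_0,\mu)$ silently carries the extra piece $4\mu x_0'QWQx_0 + 4\mu M_3'Qx_0$. To reconcile this mismatch I would exploit that $x_0$ is deterministic and fixed, so this phantom contribution is a pure constant. I would add it inside the stage sum to complete $g_0$ to the same functional form as the remaining $g_t$, and simultaneously subtract the identical constant outside. Gathering this constant together with $-\mu\bar{\epsilon}$ produces exactly $g(\mu) = -\mu\bar{\epsilon} - 4\mu x_0'QWQx_0 - 4\mu M_3'Qx_0$, which completes the identity. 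No integrability concerns arise, since well-definiteness of every expectation above is already guaranteed by Proposition~\ref{ANA_PROP_Reformulation} under Assumption~\ref{FOR_ASS_noise}; accordingly I do not expect any real obstacle beyond the bookkeeping of the $t=0$ boundary term.
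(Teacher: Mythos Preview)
Your proposal is correct and is exactly the approach the paper takes; the paper's own proof consists of the single line ``It follows from Proposition~\ref{ANA_PROP_Reformulation} and the form of $\LAG$,'' and your write-up simply unpacks that sentence, including the add-and-subtract handling of the deterministic $t=0$ boundary term that produces $g(\mu)$.
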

\begin{proof}
It follows from Proposition~\ref{ANA_PROP_Reformulation} and the form of $\LAG$.
\end{proof}
\begin{remark}[\textit{Relation to generalized LQR with tracking}]\label{ANA_REM_LQR}
The Lagrangian~\eqref{ANA_EQN_Lagrangian} has the structure of a generalized LQR problem with a tracking objective. By completing the squares we can rewrite the stage cost $g_t(x_t,u_t,\mu)$ as
\begin{align*}
    g_{t}(x_t,u_t,\mu)=(x_t+2\mu M_3)'Q(x_t+2\mu M_3)\Neg{-12}\\
+x'_t(4\mu QWQ)x_t+u_t'Ru_t-4\mu^2 M_3'QM_3,
\end{align*}
i.e., the state penalty is quadratic and consists of two distinct terms. The first one, i.e., $(x_t+2\mu M_3)'Q(x_t+2\mu M_3)$ is a tracking error term that forces the state to be close to the static target $-2\mu M_3$. Informally, in the case of skewed noise, by tracking $-2\mu M_3$ we pre-compensate for directions in which the distribution of the noise has heavy tails. This decreases the statistical variability of the predicted stage cost.  The second term, $x_t'(4\mu QWQ)x_t$, is a standard quadratic penalty term;  notice that, contrary to the risk-neutral case, the covariance of the noise $W$ now affects the penalty term. Informally, this term penalizes state directions which not only lead to high cost but are also more sensitive to noise, as captured by the product $QWQ$. 
Hence, the risk-neutral LQR framework can exhibit inherent risk-averse properties, provided that its parameters are selected in a principled way. Of course, selecting those parameters \emph{a priori} is not trivial.
\end{remark}

The structure of the Lagrangian as suggested by Lemma~\ref{ANA_THM_Lagrangian} enables us to derive both a closed-form expression for its minimum 
and an explicit optimal control policy. To this end, define the \textit{optimal cost-to-go} at stage $k\le N-1$ as
\begin{equation}
\begin{aligned}
& \Neg{6}\LAG^*_k(x_k,\mu) \\
& \Neg{1.5}\triangleq\Neg{1.5}\inf_{u_{k:N-1} \in \Neg{-1} \mathcal{U}_k} \E\Neg{1} \set{g_N(x_N,\mu)\Neg{1}+\Neg{1}\sum_{t=k}^{N-1}g_t(x_t,u_t,\mu)\Bigg|\F_k},\nonumber
\end{aligned}
\end{equation}
where we omit the constant components of the Lagrangian. Under this definition, it is true that
\[
D(\mu)\equiv\inf_{u\in\mathcal{U}_0} \LAG(u,\mu)=\LAG^*_0(x_0,\mu)+g(\mu).
\]
We may now derive the complete solution to \eqref{FDUAL}, which is one of the main results of this paper, and provides optimal risk-aware control policies for every fixed multiplier $\mu\ge0$.
\begin{theorem}[\textbf{Optimal Risk-Aware Controls}]\label{ANA_THM_Optimal_Input}
    Let Assumption~\ref{FOR_ASS_noise} be in effect, choose $\mu\ge0$, and adopt the notation of Lemma \ref{ANA_THM_Lagrangian}. For $t\le N-1$, the optimal cost-to-go $\LAG^*_t(x_t,\mu)$ may be expressed as
\begin{align}\label{FOR_EQN_optimal_cost_to_go}
\LAG^*_t(x_t,\mu)= x'_{t}V_{t}x_{t} +4\mu M'_3S'_{t}x_{t}+2\bar{w}'T'_t x_t+c_{t},
\end{align}
where the quantities $V_t$, $S_t$, $T_t$ and $c_t$ are evaluated through the backward recursions
	\begin{align}
\label{ANA_EQN_DEF_Riccati_Matrix}
V_{t-1}&\Neg{1}=\Neg{1}A'V_{t}A\hspace{-.5bp}+\hspace{-.5bp}Q_{\mu}\hspace{-1.5bp}-\hspace{-1.5bp}A'V_{t}B(B'V_{t}B\hspace{-.5bp}+\hspace{-.5bp}R)^{-1}B'V_{t}A,\\
K_{t-1}&\Neg{1}=\Neg{1}-(B'V_{t}B+R)^{-1}B'V_{t}A,\\
S_{t-1}	&\Neg{1}=\Neg{1}(A+BK_{t-1})'S_t+Q,\\
T_{t-1}	&\Neg{1}=\Neg{1}(A+BK_{t-1})'(T_t+V_t),\\
l_{t-1}&\Neg{1}=\Neg{1}-2\mu(B'V_{t}B+R)^{-1}B' S_t  M_3,\\
h_{t-1}&\Neg{1}=\Neg{1}-(B'V_{t}B+R)^{-1}B'(V_t+T_t)\bar{w}\quad\textrm{and}\\
c_{t-1}&\Neg{1}=\Neg{0.5}c_t+\Tr (WV_{t})+\bar{w}'(2T_{t}'+V_t)\bar{w}+4\mu M_3'S_N'\bar{w}\nonumber \\
&\Neg{-6}-(l_{t-1}+h_{t-1})'(B'V_tB+R)(l_{t-1}+h_{t-1}),
\end{align}
with terminal values $V_N=Q_{\mu}$, $S_N=Q$, $T_N=0$ and $c_N=0$.
Additionally, an optimal control policy 
that achieves the dual value in \eqref{FDUAL} may be expressed as
		\begin{align}\label{ANA_EQN_Optimal_Input}
u^*_{t}(\mu)=K_{t}x_{t}+l_{t}+h_t \in  \LL_2(\F_t),\,\,\,\forall t\le N-1,
	\end{align}
and is unique up to sets of probability measure zero. 
\end{theorem}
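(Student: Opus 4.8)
The plan is to prove the claim by backward induction on the stage index, using Bellman's principle of optimality applied to the dynamic-programming form of the Lagrangian established in Lemma~\ref{ANA_THM_Lagrangian}. The base case is immediate: since $\LAG^*_N(x_N,\mu)=g_N(x_N,\mu)=x_N'Q_\mu x_N+4\mu M_3'Qx_N$, the asserted form~\eqref{FOR_EQN_optimal_cost_to_go} holds at $t=N$ with the stated terminal values $V_N=Q_\mu$, $S_N=Q$, $T_N=0$, $c_N=0$ (recalling $Q=Q'$). For the inductive step I would assume $\LAG^*_t$ has the form~\eqref{FOR_EQN_optimal_cost_to_go} and evaluate the Bellman backup
\[
\LAG^*_{t-1}(x_{t-1},\mu)=\inf_{u_{t-1}\in\LL_2(\F_{t-1})}\E\set{g_{t-1}(x_{t-1},u_{t-1},\mu)+\LAG^*_t(x_t,\mu)\,\big|\,\F_{t-1}},
\]
with $x_t=Ax_{t-1}+Bu_{t-1}+w_t$.

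The first substantive step is to carry out the conditional expectation. Since $w_t$ is independent of $\F_{t-1}$ while $x_{t-1}$ and $u_{t-1}$ are $\F_{t-1}$-measurable, I would treat $z\triangleq Ax_{t-1}+Bu_{t-1}$ as known and integrate out $w_t$ using only $\E w_t=\bar w$ and $\E w_tw_t'=W+\bar w\bar w'$. The quadratic term then contributes $z'V_tz+2\bar w'V_tz+\Tr(WV_t)+\bar w'V_t\bar w$, and the two linear terms in $x_t$ contribute their values at $z+\bar w$. After adding $g_{t-1}$, the inner objective becomes a \emph{deterministic} quadratic in $u_{t-1}$ with quadratic coefficient $B'V_tB+R$.

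The second step is the minimization. Because $R\succ0$ (Assumption~\ref{FOR_ASS_LQR}) and $V_t\succeq0$ for every $t$ — which I would establish as a side induction, noting that $Q_\mu=Q+4\mu QWQ\succeq0$ and that the Riccati map~\eqref{ANA_EQN_DEF_Riccati_Matrix} preserves positive semidefiniteness (a standard fact, following from $V_t-V_tB(B'V_tB+R)^{-1}B'V_t\succeq0$) — the coefficient $B'V_tB+R$ is positive definite, so the objective is strictly convex in $u_{t-1}$ with a unique stationary point. Solving the first-order condition I would read off exactly $u^*_{t-1}=K_{t-1}x_{t-1}+l_{t-1}+h_{t-1}$, where $K_{t-1}$ collects the $x_{t-1}$-linear part, $l_{t-1}$ the $M_3$-dependent (skewness) part, and $h_{t-1}$ the $\bar w$-dependent (mean) part. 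Substituting this minimizer back and completing the square reproduces the form~\eqref{FOR_EQN_optimal_cost_to_go} at stage $t-1$; matching the quadratic, the $M_3'(\cdot)x$, the $\bar w'(\cdot)x$, and the constant coefficients then yields the recursions for $V_{t-1}$, $S_{t-1}$, $T_{t-1}$ and $c_{t-1}$, respectively, with the closed-loop matrix $A+BK_{t-1}$ appearing because the cross terms between $K_{t-1}x_{t-1}$ and $(l_{t-1}+h_{t-1})$ feed the linear coefficients.

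The bulk of the work, and the main place to be careful, is twofold. First, the square-completion bookkeeping: unlike standard LQR, two distinct affine drifts ($M_3$ and $\bar w$) propagate backward, so I must track the cross terms linking the $S$- and $T$-recursions and ensure they land in the correct coefficient rather than leaking into $c_{t-1}$. Second, I must justify admissibility together with the interchange of infimum and conditional expectation: since $u_{t-1}$ ranges over $\F_{t-1}$-measurable maps and $w_t$ is independent of $\F_{t-1}$, the infimum may be taken pointwise, and the resulting $u^*_t$ is $\F_t$-measurable (being affine in $x_t$ with deterministic coefficients) and square-integrable by Assumption~\ref{FOR_ASS_noise}, hence lies in $\LL_2(\F_t)$; the strict convexity established above then delivers uniqueness up to $\mathbb{P}$-null sets. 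Since $D(\mu)=\LAG^*_0(x_0,\mu)+g(\mu)$, this control policy attains the dual value in~\eqref{FDUAL}, completing the argument.
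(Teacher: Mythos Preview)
Your proposal is correct and follows essentially the same approach as the paper's own proof: backward induction via Bellman's equation, computing the conditional expectation of the quadratic cost-to-go using the independence of $w_t$ from $\F_{t-1}$, then minimizing the resulting strictly convex quadratic in $u_{t-1}$ and substituting back to identify the recursions. Your treatment is in fact slightly more careful than the paper's in explicitly justifying $B'V_tB+R\succ0$ via a side induction on $V_t\succeq0$ and in articulating the pointwise-infimum and $\LL_2$-admissibility arguments; the paper compresses these into a remark that measurability issues ``are resolved in a recursive way, retrospectively.''
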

\begin{proof}
By using dynamic programming and assuming (temporarily) that involved measurability issues are resolved (\cite{bertsekas2012approximate}, Appendix A),  we have, for every $k\le N-1$, the recursive optimality condition (i.e., the Bellman equation)
\begin{align}
\Neg{8.5}\LAG^*_k\Neg{0.5}(x_k,\Neg{0.5}\mu)&\Neg{1.5}=\Neg{1.5}\inf_{u_k} \Neg{0.5}\E\Neg{1.8}\set{g_{k}\Neg{0.5}(x_k,\Neg{0.5}u_k,\Neg{0.5}\mu)\Neg{2}+\Neg{2}\LAG^*_{k+1}\Neg{0.8}(x_{k+1},\Neg{0.5}\mu)\big|\F_{k}\Neg{0.6}}
\Neg{1.5},\Neg{6}\nonumber
\end{align}
where minimization is taken \textit{pointwise} (i.e., over constants) over the control space $\mathbb{R}^p$, and with $\LAG^*_N(x_N,\mu)=x'_NQx_N$.
We will prove the result by induction. The base case $k=N$ is immediate. Assume that~\eqref{FOR_EQN_optimal_cost_to_go} is true for $k=t+1$;
we will show that this implies the same for $k=t$.
By Lemma~\ref{ANA_THM_Lagrangian}, and after standard algebraic manipulations, we obtain
\begin{align}
&\E\Neg{1.5}\set{g_{t}(x_t,u_t,\mu)\Neg{1.5}+\Neg{1.5}\LAG^*_{t+1}(x_{t+1},\mu)\big|\F_{t}}
\Neg{2}=\Neg{1}u_{t}'(B'V_{t+1}B\Neg{1}+\Neg{1}R) u_{t}\nonumber\\&+2\set{x'_t A' V'_{t+1}+\bar{w}'(V_{t+1}+T'_{t+1})+2\mu M'_3S'_{t+1}}Bu_{t}\nonumber\\
&+x_{t}'(A'V_{t+1}A +\bar{Q})x_{t}+c_{t+1}\nonumber\\&+2\set{2\mu M'_3S'_{t+1}+\bar{w}'(V_{t+1}+T_{t+1}')}Ax_{t}\nonumber\\
&+\bar{w}'(2T_{t}'+V_t)\bar{w}+4\mu M_3'S_N'\bar{w}+\Tr (WV_{t+1}),\label{ANA_EQN_Cost_To_Go_Form}
\end{align}
where we have exploited the identities $x_{t+1}=Ax_t+Bu_t+w_{t+1}$,  $\E(w_{t+1}|\F_{t})=\bar{w}$ and $\E (w_{t+1}-\bar{w})(w_{t+1}-\bar{w})'=W$. The reader may also verify that all measurability issues are now resolved in a recursive way, retrospectively.
The unique stationary point of the convex quadratic \eqref{ANA_EQN_Cost_To_Go_Form} is
\begin{align*}
u^*_t=K_{t}x_t+l_t+h_t,
\end{align*}
which may be readily verified to lie in $\LL_2(\F_t)$, as well.
Replacing $u^*_t$ into~\eqref{ANA_EQN_Cost_To_Go_Form} yields the optimal cost-to-go~\eqref{FOR_EQN_optimal_cost_to_go}.
\end{proof}
As suggested by Remark~\ref{ANA_REM_LQR}, it turns out that the optimal controller~\eqref{ANA_EQN_Optimal_Input} is affine with respect to the state. The noise-dependent term $\ell_t$ forces the state to track the reference $-2\mu M_3$, which points away from 
heavy-tailed regions of the noise distribution.
Meanwhile, the state-feedback term accounts for the internal dynamics. Similar to the risk-neutral case, the controller's behavior is governed the Riccati difference equation~\eqref{ANA_EQN_DEF_Riccati_Matrix}. However, we now have an inflated stage cost matrix $\bar{Q}=Q+4\mu QWQ$, instead of the original.  As suggested by the product $QWQ$, the risk-aware control gain becomes more strict in directions that are simultaneously more costly and prone to noise, as captured by the covariance $W$. Finally, the term $h_t$ acts against the mean value of the noise--such a term also appears in risk-neutral LQR.

As a corollary, from standard LQR theory, we immediately obtain that for any $\mu\ge 0$, the optimal controller~\eqref{ANA_EQN_Optimal_Input} will be internally stable, i.e., the spectral radius will be bounded $\rho(A+BK_t)<1$, as the horizon $N$ grows to infinity.
\begin{corollary}[\textbf{Internal Stability}]\label{STABILITY}
Let Assumptions \ref{FOR_ASS_noise} and \ref{FOR_ASS_LQR} be in effect, and adopt the notation of Lemma \ref{ANA_THM_Lagrangian}.
For fixed $\mu\ge0$, consider the control policy $u^*(\mu)$, as defined in~\eqref{ANA_EQN_Optimal_Input}. As $N\rightarrow \infty$, $V_t$ converges exponentially fast to the unique stabilizing solution\footnote{A stabilizing solution renders $A+BK$ stable.} of the algebraic Riccati equation
\[
V=A'VA+Q_{\mu}- A'VB(B'VB+R)^{-1}B'VA.
\] 
As a result, for every $t\ge 0$, it is true that, as $N\rightarrow \infty$, 
\begin{align*}
    K_t&\rightarrow K\triangleq -(B'VB+R)^{-1}B'VA,\\
    S_{t}&\rightarrow S \triangleq (I-(A+BK)')^{-1}Q,\\
    T_t&\rightarrow T \triangleq (I-(A+BK)')^{-1}(A+BK)'V,\\
    l_t&\rightarrow l \triangleq -2\mu(B'VB+R)^{-1}B'SM_3\quad \textrm{and}\\
    h_t&\rightarrow h \triangleq -(B'VB+R)^{-1}B'(V+T)\bar{w},
\end{align*}
exponentially fast, and the closed-loop matrix $A+BK$ is stable (spectral radius $\rho(A+BK)<1$).

\end{corollary}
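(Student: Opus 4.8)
The plan is to observe that, for any fixed $\mu\ge0$, the backward recursion \eqref{ANA_EQN_DEF_Riccati_Matrix} for $V_t$ is nothing but the standard finite-horizon Riccati recursion of a \emph{risk-neutral} LQR problem in which the state penalty $Q$ has been replaced by the inflated penalty $Q_\mu=Q+4\mu QWQ$. Accordingly, I would first check that Assumption~\ref{FOR_ASS_LQR} survives this replacement, i.e., that the modified problem still satisfies the standard LQR hypotheses, and then simply invoke the classical convergence theorem for the discrete-time Riccati recursion. The convergence of the remaining quantities $K_t,S_t,T_t,l_t,h_t$ would then follow by regarding their recursions as exponentially stable linear systems driven by the already-convergent sequence $V_t$.

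Concretely, the first task is to verify that $Q_\mu\succeq0$ and that the pair $(A,Q_\mu^{1/2})$ is detectable. Positive semidefiniteness is immediate, since for any $v$ we have $v'Q_\mu v=v'Qv+4\mu(Qv)'W(Qv)\ge0$, using $W\succeq0$ and $\mu\ge0$. For detectability I would use the eigenvector form of the PBH test: suppose $v\neq0$ is an eigenvector of $A$ with eigenvalue $\lambda$, $\abs{\lambda}\ge1$, and $Q_\mu^{1/2}v=0$, i.e., $v'Q_\mu v=0$. Since $v'Q_\mu v$ is a sum of two nonnegative terms, both vanish, so in particular $v'Qv=0$ and hence $Q^{1/2}v=0$; this contradicts the detectability of $(A,Q^{1/2})$ guaranteed by Assumption~\ref{FOR_ASS_LQR}. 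Thus $(A,Q_\mu^{1/2})$ is detectable. Because stabilizability of $(A,B)$ and $R\succ0$ are unaffected by the replacement, all standard LQR hypotheses hold for the inflated problem.

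With this in place, the second step is to invoke the classical result that, under stabilizability of $(A,B)$, detectability of $(A,Q_\mu^{1/2})$, $Q_\mu\succeq0$, and $R\succ0$, the recursion \eqref{ANA_EQN_DEF_Riccati_Matrix} initialized at $V_N=Q_\mu$ converges (as $N\to\infty$, for each fixed $t$) exponentially fast to the unique positive semidefinite stabilizing solution $V$ of the stated algebraic Riccati equation, and that the closed-loop matrix $A+BK$ with $K=-(B'VB+R)^{-1}B'VA$ satisfies $\rho(A+BK)<1$. Since $K_t$ is a locally Lipschitz (rational) function of $V_{t+1}$ near $V$, exponential convergence $K_t\to K$ follows at once. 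The third step handles the auxiliary recursions: because $K_t\to K$ exponentially and $A+BK$ is stable, the backward linear recursions for $S_t$ and $T_t$ are exponentially stable, asymptotically time-invariant systems whose coefficient perturbations decay exponentially; a routine perturbation argument then gives $S_t\to S$ and $T_t\to T$ exponentially, where $S$ and $T$ are the fixed points $S=(A+BK)'S+Q$ and $T=(A+BK)'(T+V)$, solved in closed form via the invertibility of $I-(A+BK)'$ (which holds since $\rho(A+BK)<1$). Finally, $l_t$ and $h_t$ are continuous algebraic functions of $V_t,S_t,T_t$, and thus inherit exponential convergence to $l$ and $h$.

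I expect the conceptual crux to be the detectability-preservation step, since it is precisely what licenses the reduction to standard LQR theory; although the argument is short, it is essential, because without it the stabilizing solution of the algebraic Riccati equation might fail to exist or fail to be unique. The only remaining point requiring mild care is the exponential convergence of the time-varying recursions for $S_t$ and $T_t$, which rests on transferring the exponential stability of the limiting map $A+BK$ to the perturbed maps $A+BK_t$---routine given that $K_t\to K$ exponentially.
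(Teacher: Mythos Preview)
Your proposal is correct and follows essentially the same approach as the paper: verify that $(A,Q_\mu^{1/2})$ remains detectable (the paper argues this directly from $Q_\mu\succeq Q$, you via the PBH test, but these are the same observation), then invoke standard Riccati convergence theory for $V_t,K_t$ and treat the remaining recursions as perturbations of stable linear maps. In fact your outline supplies considerably more detail than the paper's own proof, which dispatches the auxiliary recursions for $S_t,T_t,l_t,h_t$ with the single sentence ``follows similar steps.''
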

\begin{proof}
Since $Q_{\mu}\succeq Q$ and $(A,Q^{1/2})$ is detectable, the pair $(A,Q_{\mu}^{1/2})$ is also detectable.
Since $(A,B)$ is stabilizable, $(A,Q_{\mu}^{1/2})$ is detectable, and $R\succ 0$, the exponential convergence of $V_t$ and $K_t$ to $V$ and $K$ respectively, and the stability of $A+BK$ follow from standard LQR theory~\cite[Chapter 4]{anderson2005optimal}.  The proof of the convergence of the remaining terms follows similar steps.
\end{proof}
Up to now we have discussed the properties of the optimal controller given a fixed $\mu\ge0$. In what follows, we show how to  compute an optimal multiplier $\mu^*$, which will also satisfy the sufficient conditions for optimality suggested by Theorem \ref{KKT}, as stated previously.

\subsection{Recovery of Primal-Optimal Solutions}
From Theorem~\ref{ANA_THM_Optimal_Input}, we know how to compute the relaxed optimal input $u^*(\mu)$, for any given multiplier $\mu\ge0$.
But the risk constraint of the primal problem \eqref{ANA_EQN_Reformulation} is the only one that has been dualized in the construction of the Lagrangian in~\eqref{ANA_EQN_Lagrangian_Original}. Then, it turns out that we can also compute an optimal multiplier $\mu^*$ via bisection, thus providing a complete solution to the primal problem.
We exploit the fact that, under the relaxed optimal policy $u^*(\cdot)$, both the LQR cost $J(u^*(\cdot))$ and the risk  functional $J_R(u^*(\cdot))$ are monotone functions.
\begin{theorem}[\textbf{Primal-Optimal Solution}]\label{OPTIMAL} Let Assumption \ref{FOR_ASS_noise} be in effect, and consider
the control policy $u^*(\mu), \mu\ge0$, as defined in~\eqref{ANA_EQN_Optimal_Input}. Then, the following statements are true:
\begin{enumerate}
    \item The LQR cost $J(u^*(\mu))$ is increasing with $\mu\ge0$, while the risk constraint functional $J_R(u^*(\mu))$ is decreasing.
    \item
    Define the multiplier  
    \begin{align}\label{ANA_EQN_Optimal_Multiplier}
    \mu^* \triangleq \inf\set{\mu \ge 0:\:J_R(u^*(\mu))\le \bar{\epsilon}}.
\end{align}
If $\mu^*$ is finite, then the policy $u^*(\mu^*)$ is optimal for the primal problem~\eqref{ANA_EQN_Reformulation}, and this is the case as long as 
\eqref{ANA_EQN_Reformulation} 
 satisfies
Slater's condition.
\end{enumerate} 
\end{theorem}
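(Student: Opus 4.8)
The plan is to establish statement~(1) first and then feed it directly into statement~(2), closing the latter by verifying the three sufficient conditions of Theorem~\ref{KKT} at the pair $(u^*(\mu^*),\mu^*)$.

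For the monotonicity in statement~(1) I would use only the defining property of $u^*(\mu)$, namely that by Theorem~\ref{ANA_THM_Optimal_Input} it attains $D(\mu)=\inf_{u\in\mathcal{U}_0}\LAG(u,\mu)$ with $\LAG(u,\mu)=J(u)+\mu J_R(u)-\mu\bar\epsilon$. Fixing $\mu_1<\mu_2$ and writing $u_i\triangleq u^*(\mu_i)$, optimality of $u_1$ at $\mu_1$ and of $u_2$ at $\mu_2$ gives $J(u_1)+\mu_1 J_R(u_1)\le J(u_2)+\mu_1 J_R(u_2)$ and $J(u_2)+\mu_2 J_R(u_2)\le J(u_1)+\mu_2 J_R(u_1)$. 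Adding these and cancelling the $J$-terms yields $(\mu_1-\mu_2)(J_R(u_1)-J_R(u_2))\le 0$, hence $J_R(u_1)\ge J_R(u_2)$, so $J_R(u^*(\cdot))$ is nonincreasing. Substituting this back into the first inequality together with $\mu_1\ge0$ gives $J(u_1)\le J(u_2)$, so $J(u^*(\cdot))$ is nondecreasing. This monotonicity is the only content of statement~(1) I would need downstream.

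For statement~(2), the first condition of Theorem~\ref{KKT} is immediate, since $u^*(\mu^*)$ attains $D(\mu^*)$ by Theorem~\ref{ANA_THM_Optimal_Input}. The remaining two conditions---feasibility $J_R(u^*(\mu^*))\le\bar\epsilon$ and complementary slackness $\mu^*(J_R(u^*(\mu^*))-\bar\epsilon)=0$---I would obtain by combining statement~(1) with continuity of $\mu\mapsto J_R(u^*(\mu))$. Continuity holds because the recursions for $V_t,K_t,S_t,T_t,l_t,h_t$ in Theorem~\ref{ANA_THM_Optimal_Input} are finite compositions of continuous matrix operations in $Q_{\mu}=Q+4\mu QWQ$ (which is affine in $\mu$), with $B'V_tB+R\succ0$ invertible throughout because $R\succ0$ and $V_t\succeq0$; hence $u^*(\mu)$, and therefore the quadratic functional $J_R(u^*(\mu))$, depend continuously on $\mu$. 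Continuity makes the sublevel set $S\triangleq\{\mu\ge0:J_R(u^*(\mu))\le\bar\epsilon\}$ closed and monotonicity makes it an interval, so when $\mu^*=\inf S$ is finite this forces $S=[\mu^*,\infty)$; thus $\mu^*\in S$ and feasibility holds. For slackness, if $\mu^*=0$ it is trivial, while if $\mu^*>0$ then $J_R(u^*(\mu))>\bar\epsilon$ for every $\mu<\mu^*$, so continuity gives $J_R(u^*(\mu^*))\ge\bar\epsilon$; combined with feasibility this yields $J_R(u^*(\mu^*))=\bar\epsilon$. Theorem~\ref{KKT} then certifies that $u^*(\mu^*)$ solves both~\eqref{ANA_EQN_Reformulation} and~\eqref{FOR_EQN_LQR_constrained}. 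Finally, to show that Slater's condition guarantees $\mu^*<\infty$ I would give a direct weak-duality bound rather than invoke abstract strong duality: picking a strictly feasible $\tilde u\in\mathcal{U}_0$ with $J_R(\tilde u)\le\bar\epsilon-\delta$ for some $\delta>0$, the identity $D(\mu)=J(u^*(\mu))+\mu(J_R(u^*(\mu))-\bar\epsilon)$ together with $J(u)\ge0$ gives $D(\mu)\ge\mu(J_R(u^*(\mu))-\bar\epsilon)$, while $D(\mu)\le\LAG(\tilde u,\mu)\le J(\tilde u)-\mu\delta$; combining these yields $J_R(u^*(\mu))<\bar\epsilon$ whenever $\mu>J(\tilde u)/\delta$, so $\mu^*\le J(\tilde u)/\delta<\infty$.

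I expect the main obstacle to be the boundary analysis at $\mu^*$, namely converting the infimum defining $\mu^*$ into the exact complementary-slackness equality $J_R(u^*(\mu^*))=\bar\epsilon$. This is the one step where continuity of the parametric optimizer is indispensable, and where I must appeal to the explicit recursive structure of Theorem~\ref{ANA_THM_Optimal_Input} rather than to abstract duality arguments alone.
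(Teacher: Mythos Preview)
Your proposal is correct and follows essentially the same approach as the paper: the monotonicity argument in part~(1) is identical, and in part~(2) you verify the conditions of Theorem~\ref{KKT} via continuity of $\mu\mapsto J_R(u^*(\mu))$ just as the paper does. The only noteworthy difference is that for the Slater claim you give a direct weak-duality bound yielding the explicit estimate $\mu^*\le J(\tilde u)/\delta$, whereas the paper argues by contradiction that the set $\{\mu:J_R(u^*(\mu))\le\bar\epsilon\}$ must be nonempty; your version is slightly more informative but the underlying inequality $D(\mu)\le J(\tilde u)+\mu(J_R(\tilde u)-\bar\epsilon)$ is the same.
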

\begin{proof}
To prove part 1), let $\mu_2> \mu_1\ge 0$. From the definition of the Lagrangian and optimality of the controller $u^*(\mu)$, we obtain the inequalities
\begin{align*}
    J(u^*(\mu_1))+\mu_1 J_R(u^*(\mu_1))&\le  J(u^*(\mu_2))+\mu_1 J_R(u^*(\mu_2))\\
    J(u^*(\mu_1))+\mu_2 J_R(u^*(\mu_1))&\ge J(u^*(\mu_2))+\mu_2 J_R(u^*(\mu_2))  .
\end{align*}
By subtracting, we get
\[
(\mu_2-\mu_1)\set{J_R(u^*(\mu_1))-J_R(u^*(\mu_2))}\ge 0,
\]
which shows that $J_R(u^*(\mu_1))\ge J_R(u^*(\mu_2))$. The proof of
$J(u^*(\mu_1))\le J(u^*(\mu_2))$ is similar.

To prove part 2), we first show that, whenever $\mu^*<\infty$, $\mu^*\Neg{0.6}(J_R(u^*\Neg{0.6}(\mu^*))\Neg{1}-\Neg{1}\bar{\epsilon})\Neg{2.2}=\Neg{2}0$, i.e., complementary slackness holds. 
We have two cases: either $\mu^*=0$, where complementary slackness is satisfied trivially; or $\mu^*>0$, $J_R(u^*(\mu^*))\le \bar{\epsilon}$. Therefore, it will be sufficient to show that in the latter case we can only have $J_R(u^*(\mu^*))= \bar{\epsilon}$.
Since $\mu^*>0$, it is true that $J_R(u^*(0))>\bar{\epsilon}$. From Theorem~\ref{ANA_THM_Optimal_Input} and Proposition~\ref{ANA_PROP_Risk_Evaluation}, it also follows that the function $J_R(u^*(\mu))$ is continuous with respect to $\mu$ (all matrix inverses in~\eqref{ANA_EQN_DEF_Riccati_Matrix} are continuous since $R\succ 0$). Now, assume that $J_R(u^*(\mu^*))< \bar{\epsilon}$. Then by continuity, there exists a $0<\bar{\mu}<\mu^*$ such that $J_R(u^*(\bar{\mu}))=\bar{\epsilon}$, contradicting the definition of $\mu^*$. Hence, we can only have $J_R(u^*(\mu^*))= \bar{\epsilon}$, which shows that complementary slackness is satisfied. 

Now, complementary slackness, along with the trivial fact that $J_R(u^*(\mu^*))\le \bar{\epsilon}$ imply that the policy-multiplier pair $(u^*(\mu^*),\mu^*) \in \mathcal{U}_0 \times \mathbb{R}_+$ satisfies the sufficient conditions for optimality provided by Theorem \ref{KKT}. Enough said. 

To prove the last claim of part 2), suppose that \eqref{ANA_EQN_Reformulation} 
 satisfies
Slater's condition, i.e., that there is an admissible policy $u^\dagger$ in $\mathcal{U}_0$ such that $J_R(u^\dagger)-\bar{\epsilon}<0$. For every $\mu\ge0$, we have
\begin{equation}
\begin{aligned}
    D(\mu) &\le J(u^\dagger) + \mu (J_R(u^\dagger)-\bar{\epsilon}) \nonumber\\
\implies D(\mu) - \mu (J_R(u^\dagger)-\bar{\epsilon}) &\le J(u^\dagger)<\infty. \nonumber
\end{aligned}
\end{equation}
Next, suppose that, for every $\mu\ge0$, $J_R(u^*(\mu))-\bar{\epsilon}\ge0$. Because $J(u^*(\cdot))$ is increasing on $\mathbb{R}_+$, it must be true that
\begin{equation}
\begin{aligned}
    J(u^\dagger) \Neg{2}&\ge\Neg{1} \sup_{\mu\ge0} D(\mu) \Neg{1}-\Neg{1} \mu (J_R(u^\dagger)\Neg{1}-\Neg{1}\bar{\epsilon}) \nonumber\\
    & =\Neg{1} \sup_{\mu\ge0} J(u^*(\mu)) \Neg{1}+\Neg{1} \mu (J_R(u^*(\mu))\Neg{1}-\Neg{1}\bar{\epsilon}) \Neg{1}-\Neg{1} \mu (J_R(u^\dagger)\Neg{1}-\Neg{1}\bar{\epsilon}) \nonumber\\
    & = \Neg{1}\infty,
\end{aligned}
\end{equation}
which contradicts the fact that $J(u^\dagger)<\infty$. Therefore, there must exist $\mu^\dagger \ge0$, such that $J_R(u^*(\mu^\dagger))-\bar{\epsilon}<0$. But $J_R(u^*(\cdot))$ is decreasing on $\mathbb{R}_+$ and, consequently, it must be the case that $\mu^*\in[0,\mu^\dagger)$. The proof is now complete.
\end{proof}

Theorem \ref{OPTIMAL} implies that we can find an optimal  multiplier satisfying the optimality conditions on Theorem \ref{KKT} from~\eqref{ANA_EQN_Optimal_Multiplier}, by performing simple bisection on $\mu$.
Of course, this requires evaluating the risk constraint functional $J_R(u^*(\mu))$ for different values $\mu\ge0$. The evaluation may be performed in a recursive fashion, as the following result suggests.
\begin{proposition}[\textbf{Risk Functional Evaluation}]\label{ANA_PROP_Risk_Evaluation} Let Assumption \ref{FOR_ASS_noise} be in effect, and adopt the notation of Lemma \ref{ANA_THM_Lagrangian}.
For fixed $\mu\ge0$, consider the control policy $u^*(\mu)$, as defined in~\eqref{ANA_EQN_Optimal_Input}.
With terminal values $P_N=4QWQ$, $z_N=4M_3'Q$, $r_N=0$, consider the backward recursions
\begin{align}
&P_{t-1}=(A+BK_{t-1})'P_t(A+BK_{t-1})+4QWQ,\nonumber\\
&z_{t-1}=(A+BK_{t-1})'z_t+4QM_3\nonumber\\
&\,\,\,+2(A+BK_{t-1})'P_t\paren{Bh_{t-1}+Bl_{t-1}+\bar{w}}\,\,\,\, \textrm{and}\nonumber\\
&r_{t-1}=r_t+\Tr(P_tW)+z_t'(\bar{w}+Bl_{t-1}+Bh_{t-1})\nonumber\\
&\,\,\,+(Bl_{t-1}+Bh_{t-1}+\bar{w})'P_{t}(Bl_{t-1}+Bh_{t-1}+\bar{w}).\nonumber
\end{align}
Then, the risk constraint in problem \eqref{ANA_EQN_Reformulation} may be evaluated by
\begin{equation}
J_R(u^*(\mu))=x'_0P_0x_0+z'_0x_0+r_0.\nonumber
\end{equation}
\end{proposition}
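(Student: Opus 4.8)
The plan is to treat this as a pure policy-evaluation problem: the controller $u^*(\mu)$ is \emph{fixed}, not optimized, so substituting $u^*_t(\mu)=K_tx_t+l_t+h_t$ into \eqref{FOR_EQN_system} yields the closed-loop recursion $x_{t+1}=(A+BK_t)x_t+B(l_t+h_t)+w_{t+1}$, and $J_R(u^*(\mu))$ becomes the expectation of a purely quadratic-plus-affine functional of the closed-loop trajectory. I would introduce the cumulative-risk cost-to-go
\[
C_k(x_k)\triangleq \E\set{\textstyle\sum_{t=k}^{N}\clint{4x_t'QWQx_t+4x_t'QM_3}\,\big|\,\F_k},
\]
and prove, by backward induction on $k$, the quadratic-affine ansatz $C_k(x_k)=x_k'P_kx_k+z_k'x_k+r_k$. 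The base case $k=N$ is immediate and reproduces exactly the terminal data $P_N=4QWQ$, $z_N=4QM_3$, $r_N=0$ of the proposition.

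For the inductive step I would use the tower property in the form of the policy-evaluation identity $C_k(x_k)=4x_k'QWQx_k+4x_k'QM_3+\E\set{C_{k+1}(x_{k+1})\,|\,\F_k}$, substitute the closed-loop dynamics, and carry out the expectation over $w_{k+1}$, which is independent of $\F_k$ with mean $\bar{w}$ and covariance $W$. The key identities are $\E(w_{k+1}\,|\,\F_k)=\bar{w}$, $\E(w_{k+1}-\bar{w})(w_{k+1}-\bar{w})'=W$, and the trace identity $\E\set{w_{k+1}'P_{k+1}w_{k+1}\,|\,\F_k}=\Tr(P_{k+1}W)+\bar{w}'P_{k+1}\bar{w}$. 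Abbreviating $b_k\triangleq Bl_k+Bh_k+\bar{w}$ and expanding $\paren{(A+BK_k)x_k+b_k}'P_{k+1}\paren{(A+BK_k)x_k+b_k}$ together with the affine term $z_{k+1}'\paren{(A+BK_k)x_k+b_k}$, I would then match the coefficient of the quadratic form in $x_k$, of the linear term, and of the constant. These three matches yield precisely the recursions for $P_{k-1}$, $z_{k-1}$, and $r_{k-1}$ stated in the proposition (under the index identification $k\mapsto t-1$, $k+1\mapsto t$), which closes the induction and establishes the ansatz for all $k$.

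The conclusion then follows because $\E\set{C_1(x_1)}=J_R(u^*(\mu))$, so it remains to perform the single backward propagation from time $1$ to time $0$ and use that $x_0$ is deterministic, collapsing the outer conditional expectation to evaluation at $x_0$ and giving $J_R(u^*(\mu))=x_0'P_0x_0+z_0'x_0+r_0$. I expect the main obstacle to be exactly this initial-time bookkeeping: because $x_0$ is fixed and carries no predictive-variance penalty, the sum defining $J_R$ runs only over $t=1,\dots,N$, whereas the recursion is naturally propagated all the way down to the index-$0$ quantities; reconciling these two conventions at the boundary, and verifying that the final step contributes the transition statistics of $x_1$ with the correct set of stage indices, is the delicate point that must be checked with care. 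The remainder is entirely routine algebra -- the faithful propagation of the affine ($z$) and constant ($r$) parts through the noise expectation, tracking the $\bar{w}$- and $W$-dependent cross terms -- and presents no conceptual difficulty once the induction above is in place.
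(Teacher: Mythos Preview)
Your proposal is correct and is essentially the approach the paper intends: the paper omits the proof and simply notes it is ``similar to that of Theorem~\ref{ANA_THM_Optimal_Input},'' i.e., a backward-induction (policy-evaluation) argument establishing a quadratic-plus-affine cost-to-go, which is exactly what you outline. Your flagged initial-time bookkeeping at $t=0$ is indeed the only nontrivial point to verify, but the inductive mechanism and the coefficient-matching that produce the $P_t$, $z_t$, $r_t$ recursions are precisely those used in the proof of Theorem~\ref{ANA_THM_Optimal_Input}.
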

\begin{proof}
Omitted; it is similar to that of Theorem~\ref{ANA_THM_Optimal_Input}. 
\end{proof}

\section{Simulations And Discussion}\label{Section_Simulations}
\begin{figure}[t]
	\vspace{3.5bp}
	\centering
	\includegraphics[width=\columnwidth]{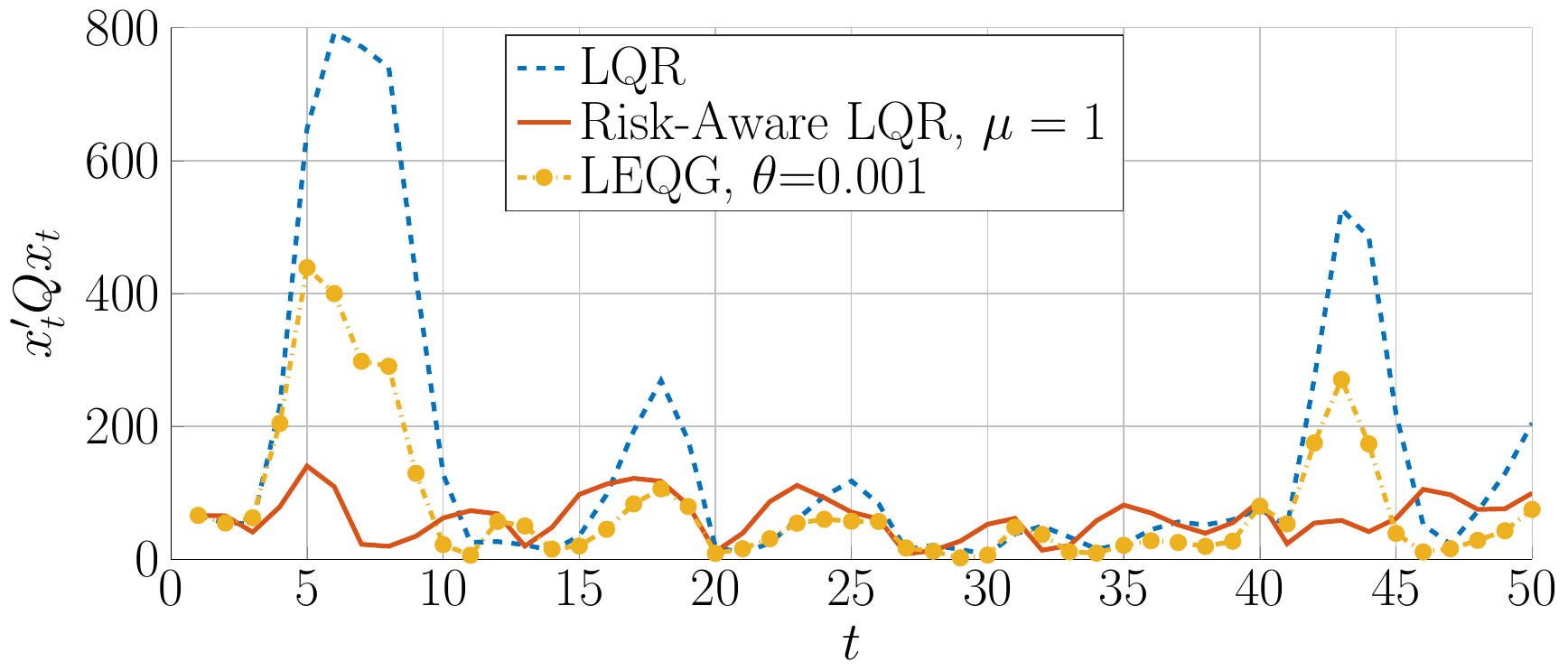}\vspace{-4bp}
	\caption{Evolution of the state penalties $x'_kQx_k$, over the first $50$ steps. Notice that our risk-aware LQR controller indeed limits the variability of $x'_kQx_k$. In fact, it sacrifices performance under small wind forces, but protects the system against large wind gusts, for example at time $5-10$.}
	\vspace{-12bp}
	\label{fig:xQx}
\end{figure}
Consider a flying robot that moves on a horizontal plane, i.e., the Euclidean space $\mathbb{R}^2$. We assume that its linearized dynamics can be abstracted by a double integrator as
\begin{equation}\label{SIM_eq:integrator}
    x_{k+1}=\matr{{cccc}1&T_s&0&0\\0&1&0&0\\0&0&1&T_s\\0&0&0&1}x_k+\matr{{cc}\tfrac{T_s^2}{2}&0\\T_s&0\\0&\tfrac{T_s^2}{2}\\0&T_s}(v_k+d_k),
\end{equation}
where  $T_s=0.5$ is the sampling time, $x_{k,1}$, $x_{k,3}$ are the position coordinates, $x_{k,2}$, $x_{k,4}$ the respective velocities and $v_k$ is the acceleration input. Let $d_k$ be a wind disturbance force that acts on the robot, which is modeled as follows: We assume that $d_{k,1}$ constitutes the dominant wind direction with non-zero mean and large variability, while the orthogonal direction $d_{k,2}$ is a weak wind direction with zero mean and small variability.
We model $d_{k,1}$ as a mixture of two gaussians $\mathcal{N}(30,30)$, $\mathcal{N}(80,60)$ with weights $0.8$ and $0.2$, respectively. This bimodal distribution models the presence of infrequent but large wind gusts.  The weak direction $d_{k,2}$ is modeled as zero-mean Gaussian~$\mathcal{N}(0,5)$.
If we cancel the mean of $d_k$ by applying $v_k=u_k-\E d_k$, then~\eqref{SIM_eq:integrator}, can be written in terms of~\eqref{FOR_EQN_system}, where $w_k=B(d_k-\E d_k)$ is now a zero-mean disturbance.

Consider now the LQR problem with parameters 
\[
Q = \textrm{diag}(1,0.1,2,0.1)
\Neg{2}\quad\textrm{and}\quad R=I,
\] 
and a horizon of length $N=5000$. We primarily compare our risk-aware LQR formulation with the classical, risk-neutral LQR via simulations. To tune our controller, we vary $\mu$ in~\eqref{ANA_EQN_Optimal_Input} directly instead of varying $\epsilon$. We also (heuristically) compare our controller with the exponential (LEQG) method, even though the noise is not Gaussian, by plugging in the second order statistics $W$. Let the tuning parameter of LEQG be $\theta$. Note that the exponential problem is well defined only if $\theta<0.001276$ (roughly), where the ``neurotic breakdown" occurs~\cite{Whittle1981}. For the purpose of comparison, we simulate all schemes under the same noise sequence $w_{0:N}$.

In Fig.~\ref{fig:xQx}, we see the evolution of the state penalty terms $x_k'Qx_k$, for the first $50$ time steps, under the different control schemes.  By slightly sacrificing performance under small wind forces, our risk-aware LQR controller forces the state to have less variability and, thus, protects the robot against large gusts. On the other hand, the state of the robot state can grow very large under the risk-neutral \textit{and} LEQG schemes. This behavior is illustrated  even more clearly in Fig.~\ref{fig:cdf}, where we present the time-empirical cumulative distribution of the state penalties for all $N$ time steps. The time-empirical "probability" of suffering large state penalties is drastically smaller compared to LQR or LEQG.

\begin{figure}[t]
\vspace{3.5bp}
	\centering{}
	\includegraphics[width=\columnwidth]{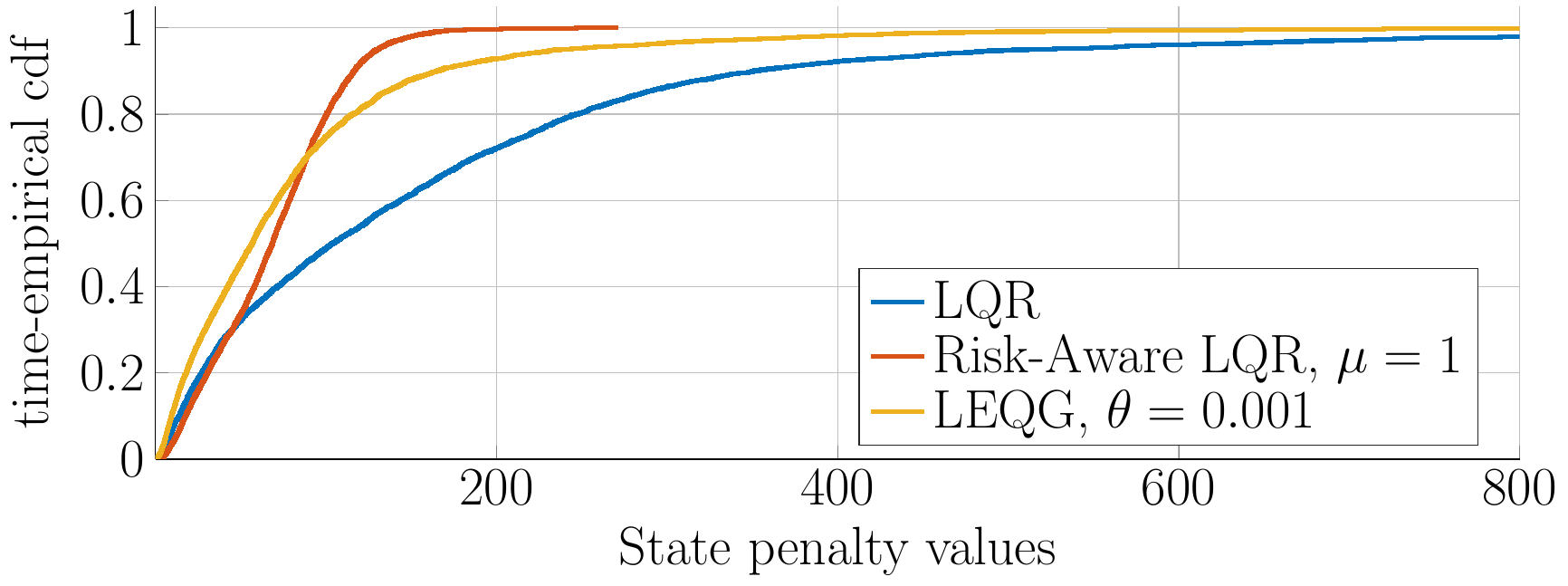}\vspace{-4bp}
	\caption{The time-empirical cdf for the state penalties $x_k'Qx_k$, $k\le N$, for the LQR (risk-neutral), our method, and LEQG. Our method sacrifices some average performance but exhibits much smaller variability for the state penalties. It also protects the system against rare but large wind gusts.}
	\vspace{-12bp}
	\label{fig:cdf}
\end{figure}
\begin{figure}[t]
\vspace{3.5bp}
	\centering
	\includegraphics[width=\columnwidth]{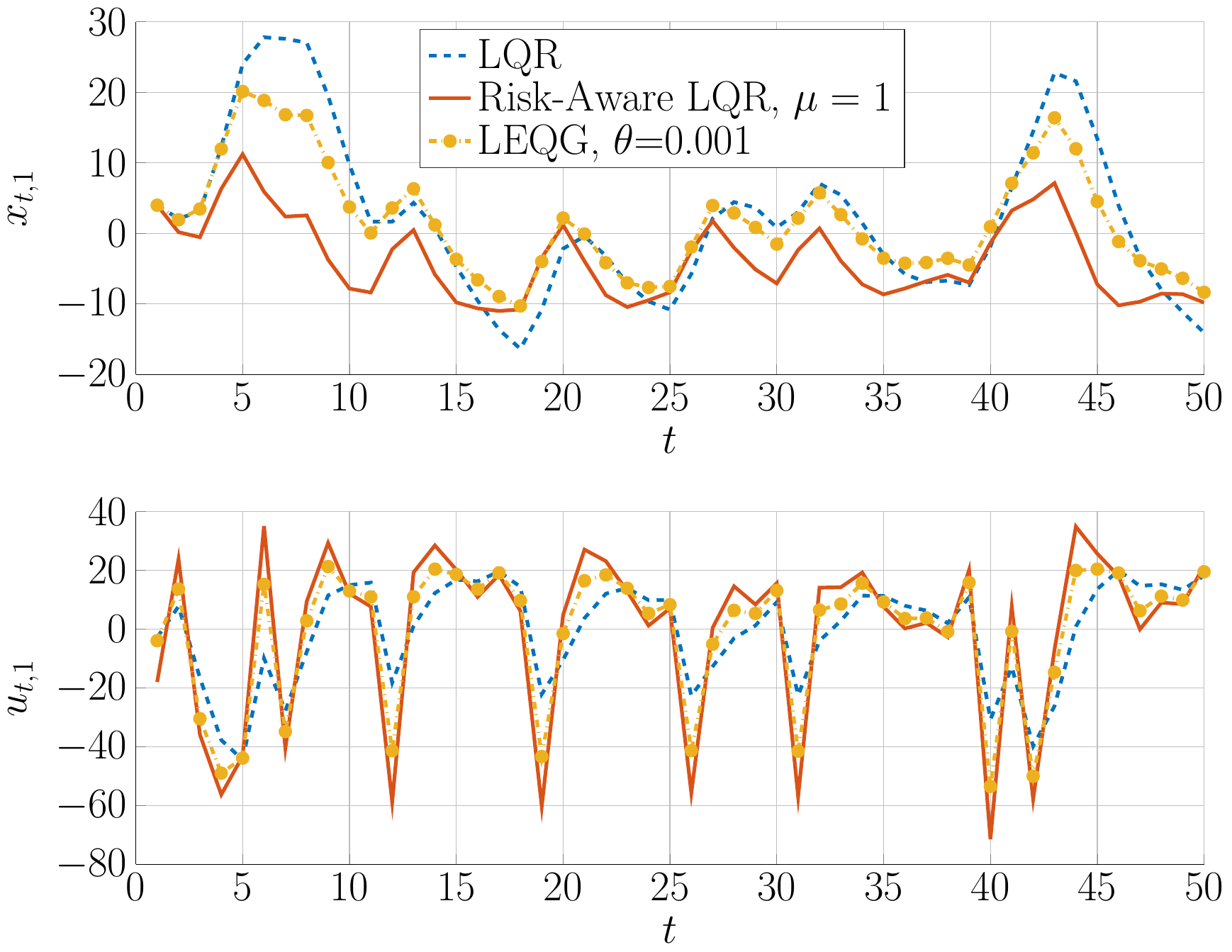}\vspace{-4bp}
	\caption{Evolution of the state $x_{k,1}$, and the input $u_{k,1}$ over the first $50$ steps. The controller pushes the state away from the direction of the large gusts, which helps the robot to avoid extreme perturbations. Meanwhile, by inflating the state penalty with the $\mu QWQ$, we force the state-feedback component to be more cautious with the state. Naturally, being more cautious with the state requires extra control effort.}
	\vspace{-13bp}
	\label{fig:x1}
\end{figure}
To better illustrate how the proposed risk-aware controller works, we also discuss the evolution of the position $x_{k,1}$ and the input $u_{k,1}$, as shown in Fig. \ref{fig:x1}, for the first $50$ steps. First, we observe that the controller pushes the state $x_{k,1}$ towards negative values, away from the direction of the large gusts.
Second, notice that we penalize $x_{k,3}$ more in $Q$. In fact, the risk-neutral LQR results in the steady state gains $K_{\mathrm{LQR},11}=-0.697$, $K_{\mathrm{LQR},12}=-1.201$, $K_{\mathrm{LQR},23}=-0.925$, $K_{\mathrm{LQR},24}=-1.376$, i.e., it is stricter with direction $x_{k,3}$. However, $x_{k,1}$ exhibits more variability due to the strong wind direction. In contrast, our risk-aware scheme adapts to the noise in a principled way. Due to the inflation term $\mu QWQ$, our scheme returns the steady-state gains $K_{11}=-2.1008$, $K_{12}=-2.2132$, $K_{23}=-1.1161$, $K_{24}=-1.5131$, which means that the risky direction $x_{k,1}$ is controlled more strictly. Naturally, being more cautious with the state leads to higher control effort. Lastly, although the LEQG controller is also more state-cautious, it is agnostic to the heavy tails of the wind distribution. Hence, it still suffers from large perturbation due to the wind gusts.

\section{Conclusion and Future Work}\label{Section_Extensions}
We presented a new risk-aware reformulation of the classical LQR problem, where we introduce a new risk measure to be used as an explicit and tunable risk constraint, along with the standard LQR objective. By restricting the expected cumulative predictive variance of the state penalties, we can decrease the variability of the state at will, protecting the system against uncommon but strong random disturbances. The optimal controller enjoys a simple closed-form expression with clear interpretation, is always stable and is easy to tune.
Our scheme works for arbitrary noise process distributions, as long as the corresponding  fourth-order moments are finite.

Moving forward, our framework opens up many directions for extensions and future research. First, we would like to note that our analysis does not depend on the constraints
having the same matrix $Q$ as in the cost. In fact, we can define our risk constraint as
\[
\E \set{\sum_{t=1}^{N} \left[ x'_tQ_c x_t-\E\paren{x'_tQ_cx_t|\F_{t-1}} \right]^2}\le \epsilon,
\]
where $Q_c$ is a design choice. Proposition~\ref{ANA_PROP_Reformulation} still holds, in the sense that the constraint can be rewritten as a quadratic one.
This implies that, thanks to its simplicity, the predictive variance constraint can be easily incorporated in more general problems, e.g., MPC or classical constrained-LQR, adding a risk-aware flavor to them.
Another possibility is to employ stage-wise constraints of the form
\[
\E\left[ x'_tQ_t x_t-\E\paren{x'_tQ_tx_t|\F_{t-1}} \right]^2\le \epsilon_t,\quad t\le N.
\]
In this case, the optimal controller will be similar to~\eqref{ANA_EQN_Optimal_Input}, but will depend on multiple Lagrange multipliers that can be optimized with primal-dual algorithms.
Lastly, our predictive variance constraint is based on one-step-ahead prediction. In some cases, careless selection of $Q_c$ 
might make our controller more myopic. At the same time, though, increasing the prediction horizon  might not always preserve the quadratic form of the constraint. In future work, we would also like to address this issue.

\section*{Appendix: Proof of Proposition~\ref{ANA_PROP_Reformulation}}
Let $\Delta_t\triangleq x'_tQ x_t-\E\paren{x'_tQx_t|\F_{t-1}}$ be the prediction error of the stage penalty at time $t$ given $\F_{t-1}$.
We proceed in two steps.
	First, we show that  $\Delta_t$ is well-defined and belongs to $\LL_2(\F_{t})$. Second, we obtain the closed form expression for the expected predictive variance $\E \set{\Delta^2_t}$. 
	
	\textit{Step a).} The state $x_t$ of the system depends linearly on past inputs $u_k$ as well as past noises $w_{k+1}$, for $k\le t-1$. Under the constraint $u_k\in\LL_2(\F_k)$, and since by Assumption~\ref{FOR_ASS_noise} $w_k\in\LL_2(\F_k)$, it also follows that 
	$x_t\in \LL_2(\F_t)$, for all $t\le N-1$. As a result, the expectation of $x_t'Qx_t$ exists and any conditional expectation $\E\paren{x'_tQ x_t|\F_{t-1}}$ is well-defined and finite almost everywhere, for all $t\le N-1$.
	Define
	\begin{align}
	\hat{x}_{t}&\triangleq \E(x_t|\F_{t-1})=Ax_{t-1}+Bu_{t-1}+\bar{w}\quad\textrm{and}\\
	\delta_t&\triangleq w_t-\bar{w}.
	\end{align}
	Note that $\hat{x}_t$ is well-defined since $x_{t}\in \LL_2(\F_t)$.
	Replacing $x_t$ with $\hat{x}_{t}+\delta_{t}$, we obtain the representation
	\begin{align*}
	x'_tQ x_t&=\hat{x}'_{t}Q\hat{x}_{t}+2\hat{x}'_{t}Q\delta_t+\delta_t'Q\delta_t.
	\end{align*}
	All of the terms above are integrable since $\hat{x}_t$, $\delta_t$ are square-integrable.
	Since $\hat{x}_{t}$ is measurable with respect to $\F_{t-1}$, the expectation of $x'_tQ x_t$ conditioned on $\F_{t-1}$ is
		\begin{align*}
	\E\paren{x'_tQ x_t|\F_{t-1}}&=\hat{x}'_{t}Q\hat{x}_{t}+\Tr (WQ).
	\end{align*}
Then, the difference of the above quantities is
	\begin{align*}
\Delta_t=\delta_t'Q\delta_t -\Tr (WQ)+2\hat{x}'_{t}Q\delta_{t}.
	\end{align*}
	Computing the squares of both sides leads to the expression
	\begin{equation}\label{ANA_EQN_Prediction_Difference}
	\begin{aligned}
\Delta^2_t&=(\delta_t'Q\delta_t -\Tr (WQ))^2+4\hat{x}'_{t}Q\delta_{t}\delta'_{t}Q\hat{x}_t\\&\quad\,+4\hat{x}'_{t}Q\delta_{t}(\delta_t'Q\delta_t -\Tr (WQ)).
	\end{aligned}
	\end{equation}
	Note that all of the above terms are integrable, hence $\E\set{\Delta^2_t}$ is well-defined and finite. Integrability of the first term follows from Assumption~\ref{FOR_ASS_noise}. Integrability of the second term comes from the fact that $\hat{x}_t$ and $\delta_t$ are square-integrable and independent of each other. Similarly, itegrability of the last term follows from integrability of $\hat{x}_t$, Assumption~\ref{FOR_ASS_noise} 
	and independence of $\hat{x}_t,\delta_t$. 
	
	\textit{Step b).}
From~\eqref{ANA_EQN_Prediction_Difference}, it is true that
	\begin{align*}
	&\E\set{\Delta_t^2|\F_{t-1}}=
	4\hat{x}'_{t}QWQ\hat{x}_{t}+m_4+4\hat{x}'_{t}QM_3.
	\end{align*}
  Taking expectation again gives
  \begin{align*}
	\E\set{\Delta_t^2}=m_4+\E(4\hat{x}'_{t}QWQ\hat{x}_{t}+4\hat{x}'_{t}QM_3).
  \end{align*}
By orthogonality of $\hat{x}_t$, $\delta_t$, and since $\E \delta_t=0$, $\E \delta_t\delta'_t=W$, 
  \begin{align*}
	\E\hspace{-1pt}\set{\Delta_t^2}&\hspace{-2pt}=\hspace{-1pt}\E(4x_{t}'QWQx_{t}\hspace{-1pt}+\hspace{-1pt}4x_{t}'QM_3)\hspace{-1pt}+\hspace{-1pt}m_4\hspace{-1pt}-\hspace{-1pt}4\Tr\Neg{1}\set{(WQ)^2}\hspace{-1pt}.
  \end{align*}
  The result follows if we replace $\E\set{\Delta^2_t}$ with the  right-hand side above in the risk constraint
 $
  \sum_{t=1}^{N}\E\set{\Delta^2_t}\le \epsilon.
$
\hfill $\qedsymbol$
\bibliographystyle{IEEEbib-abbrev.bst}
\bibliography{IEEEabrv,risk_literature,library_fixed}
\end{document}